\newtheorem{prop}{Proposition}
\newtheorem{thm}{Theorem}
\newtheorem{cor}{Corollary}
\newtheorem{lem}{Lemma}
\newtheorem{assum}{Assumption}
\newtheorem{exmp}{Example}
\providecommand{\prt}[1]{\left( #1 \right)}
\providecommand{\e}{\phantom{aaa}}
\def\jnt#1{{#1}}
\def\green#1{{#1}}
\def\TN#1{{#1}}
\def\rrev#1{{#1}}
\title{Convergence of type-symmetric and cut-balanced  
consensus seeking systems (extended version)\thanks{This is an extended version of 
\cite{HendrickxTsitsiklis_cutbalanced:2011_cdc} and \cite{HendrickxTsitsiklis_cutbalanced:2013}; it  includes proofs and some additional results that were omitted from these papers.
This research was supported by the National Science Foundation
under grant ECCS-0701623, by the Concerted Research Action (ARC) \quotes{Large Graphs and Networks} of the French Community of Belgium, by the Belgian Programme on Interuniversity Attraction Poles initiated by the Belgian Federal Science Policy Office, and by postdoctoral fellowships from
the F.R.S.-FNRS (Belgian Fund for Scientific Research) and the
B.A.E.F. (Belgian American Education Foundation).}}
\author{Julien M. Hendrickx\thanks{Universit\'e
catholique de Louvain, B-1348 Louvain-la-Neuve, Belgium; {\tt \small julien.hendrickx@uclouvain.be.}
This research was conducted while visiting LIDS, MIT.} \ 
and John N.\ Tsitsiklis\thanks{Laboratory
for Information and Decision Systems, Massachusetts Institute of
Technology, Cambridge, MA 02139, USA; {\tt \small
jnt@mit.edu}.}
}
\begin{document}
\maketitle \thispagestyle{empty}

\begin{abstract}
We consider continuous-time consensus seeking systems whose time-dependent interactions are cut-balanced, in the following sense:  if a group of agents influences the remaining ones, the former group is also influenced by the remaining ones by at least a proportional amount. Models involving symmetric interconnections and models in which a weighted average of the agent values is conserved are special cases. We prove that such systems always converge. We give a sufficient condition on the evolving interaction topology for the limit values of two agents to be the same. Conversely, we show that if our condition is not satisfied, then these limits are generically different. These results allow treating systems where the agent interactions are a priori unknown, e.g., random or determined 
endogenously by the agent values. We also derive corresponding results for discrete-time systems. 
\end{abstract}

%%%%%%%%%%%%%%%%%%%%%%%%%%%%%%%%%%%%%%%%%%%%%%%%%%%%%%%%%%%%%%%%%
%%%%%%%%%%%%%%%%%%%%%%%%%%%%%%%%%%%%%%%%%%%%%%%%%%%%%%%%%%%%%%%%%
%%%%%%%%%%%%%%%%%%%%%%%%%%%%%%%%%%%%%%%%%%%%%%%%%%%%%%%%%%%%%%%%%
\section{Introduction}
%%%%%%%%%%%%%%%%%%%%%%%%%%%%%%%%%%%%%%%%%%%%%%%%%%%%%%%%%%%%%%%%%
%%%%%%%%%%%%%%%%%%%%%%%%%%%%%%%%%%%%%%%%%%%%%%%%%%%%%%%%%%%%%%%%%

We consider continuous-time consensus seeking systems of the following kind: each of $n$ agents, indexed by $i=1,\ldots,n$, maintains a value $x_i(t)$, which is a continuous function of time and evolves according to the integral equation version of
\begin{equation}\label{eq:system_diff}
\frac{d}{dt}x_i(t) = \sum_{j=1}^n a_{ij}(t) \prt{x_j(t)-x_i(t)}.
\end{equation}
Throughout we assume that each $a_{ij}(\cdot)$ is a \emph{nonnegative} and measurable function. 
We introduce the following assumption which plays a central role in this paper.

\begin{assum}\label{as:cb} (Cut-balance)
There exists a constant $K\geq 1$ such that for all $t$, and any nonempty proper subset $S$ of $\{1,\dots,n\}$, we have\footnote{Note that the second inequality, added to emphasize the symmetry of the condition, is redundant.}
\begin{equation}\label{eq:cut_balance_intro}
 K^{-1} \sum_{i\in S,j\notin S} a_{ji}(t) \leq \sum_{i\in
S,j\notin S} a_{ij}(t) \leq K\sum_{i\in S,j\notin S} a_{ji}(t).
\end{equation}
\end{assum}

Intuitively, if a group of agents influences the remaining ones, the former 
group is also influenced by the remaining ones by at least a proportional amount. This condition may seem hard to verify in general. But, several important particular classes of consensus seeking systems automatically satisfy it. These include
symmetric systems ($a_{ij}(t) = a_{ji}(t)$), type-symmetric systems ($a_{ij}(t) \leq Ka_{ji}(t)$), and, as will be seen later, any system whose dynamics conserve a weighted average (with positive weights) of the agent values.

Under the cut-balance condition (\ref{eq:cut_balance_intro}), 
and without any further assumptions, we prove that each value $x_i$ converges to a limit.
Moreover, we show that $x_i$ and $x_j$ converge to the same limit
if $i$ and $j$ belong to the same connected component of the ``unbounded interactions graph,'' i.e., the graph whose edges correspond to the pairs $(j,i)$ for which
$\int_0^\infty a_{ij}(\tau)\, d\tau$ is unbounded. (As we will show, while this is a directed graph,
each of its weakly connected components is also strongly
connected.) Conversely, we prove that $x_i$ and $x_j$ generically converge to
different limits if $i$ and $j$ belong to different connected components
of that graph. (This latter result involves an additional technical  assumption that $\int_0^Ta_{ij}(\tau)\,d\tau<\infty$  for every $T<\infty$.) 

Our method of proof is different from traditional convergence proofs
for consensus seeking systems, which rely on either span-norm or quadratic norm contraction properties. 
It consists of showing that for every $m\leq n$, a particular
linear combination of the values of 
the $m$ agents with the smallest values
is nondecreasing and bounded, and that its total increase rate  eventually becomes bounded below 
by a positive number if two agents with
unbounded interactions were to converge to different limits. The idea of working with these linear combinations is inspired from and extends a technique used in
\cite{BlondelHendrickxTsitsiklis:2009_APCT} to analyze a particular average-preserving system. More specifically, \cite{BlondelHendrickxTsitsiklis:2009_APCT} analyses in depth a  \TN{model of opinion dynamics for which the order between the agents is preserved, where the coefficients switch between $0$ and $1$, are symmetric ($a_{ij}=a_{ji}$), and switch at most finitely often in any finite interval.} Convergence  is obtained by proving that the average value of the $m$ first agents is nondecreasing and bounded, for any $m$.

Motivation for our model comes from the fact that there are many systems 
in which an agent cannot influence the others without
being subjected to at least a fraction of the reverse influence.
This is, for example, a common assumption in numerous models of social
interactions and opinion dynamics \cite{Lorenz:2007, CastellanoFortunatoLoreto:2009},  or physical systems.

%%%%%%%%%%%%%%%%%%%%%%%%%%%%%%%%%%%%%%%%%%%%%%%%%%%%%%%%%%%%%%%%%
\subsection{Background}
%%%%%%%%%%%%%%%%%%%%%%%%%%%%%%%%%%%%%%%%%%%%%%%%%%%%%%%%%%%%%%%%%

Systems of the form (\ref{eq:system_diff}) have attracted 
considerable attention \cite{Moreau:2005,OlfatiSaberMurray:2004,XiaoWang:2008,LinBrouckeFrancis:2004,RenBeard:2005}
(see also 
\cite{RenBeardAtkins:2007,OlfatiSaberFaxMurray:2006} for surveys), 
with motivation coming from 
decentralized
coordination,  data fusion \cite{XiaoBoydLall:2005,BoydGhoshPrabhakarShah:2005}, 
animal flocking 
\cite{JadbabaieLinMorse:2003,VicsekCzirolBenjacobCohenSchchet:1995,Chazelle:2009},
and models of social behavior \cite{Lorenz:2007,CastellanoFortunatoLoreto:2009,Lorenz:2005,BlondelHendrickxTsitsiklis:2009_APCT, BlondelHendrickxTsitsiklis:2009_Krausemodel,HegselmannKrause:2002,Ben-naimKrapivskyRedner:2003}.

Available results impose some connectivity conditions 
on the evolution of the coefficients $a_{ij}(t)$, and 
usually guarantee exponentially fast convergence of each agent's value to a common limit (``consensus'').  
For example, Olfati-Saber and Murray
\cite{OlfatiSaberMurray:2004} consider the system
$$
\frac{d}{dt}x_i(t) = \sum_{j:(j,i)\in E(t)}\prt{x_j(t)-x_i(t)}
$$
with a time-varying directed graph $G(t)=(\{1,\dots,n\},E(t))$; this is a special case of the model  
(\ref{eq:system_diff}), with $a_{ij}(t)$ equal to one 
if $(j,i)\in E(t)$, and equal to zero otherwise. 
They show that if  the out-degree of every node is equal to its
in-degree at all times, and if each graph $G(t)$ is
strongly connected, then 
the system is average-preserving and 
each $x_i(t)$ converges exponentially fast to
$\frac{1}{n}\sum_{j}x_j(0)$. They also obtain similar results for systems with arbitrary but fixed coefficients $a_{ij}$. Moreau \cite{Moreau:2004} establishes
exponential convergence to consensus under weaker conditions: he only assumes that  
the $a_{ij}(t)$ are uniformly bounded, and that there exist 
$T>0$ and $\delta>0$ such that the directed graph obtained by
connecting $i$ to $j$ whenever $\int_{t}^{t+T}a_{ij}(\tau)\,d\tau>\delta$
has a rooted spanning tree, for every $t$. 
Several extensions of such
results, involving for example time delays or imperfect communications,
are also available.\footnote{It is common in the literature to treat
the system \eqref{eq:system_diff} as if the derivative existed for all $t$, which is not always the case. Nevertheless, such results remain correct under an appropriate reinterpretation of \eqref{eq:system_diff}.}

All of the above described results involve conditions that are 
easy to describe but difficult to ensure a priori, especially when the agent interactions are endogenously determined. This motivates the current work, which aims at an understanding of the convergence properties of the dynamical system \eqref{eq:system_diff} under minimal conditions. In the complete absence of any conditions, 
and especially in the absence of symmetry, 
it is well known that consensus seeking systems can fail to converge;
see e.g., Ch.\ 6 of \cite{BertsekasTsitsiklis:1989}.
On the other hand, it is also known that more predictable behavior and positive results are possible
in the following two cases: (i) symmetric (suitably defined) interactions, or (ii) average-preserving systems (e.g., in discrete-time models that involve  doubly stochastic matrices).

%%%%%%%%%%%%%%%%%%%%%%%%%%%%%%%%%%%%%%%%%%%%%%%%%%%%%%
%%%%%%%%%%%%%%%%%%%%%%%%%%%%%%%%%%%%%%%%%%%%%%%%%%%%%%
\subsection{Our contribution}
%%%%%%%%%%%%%%%%%%%%%%%%%%%%%%%%%%%%%%%%%%%%%%%%%%%%%%

Our cut-balance condition subsumes the two cases discussed above, and allows us to obtain strong convergence results. Indeed, we prove
convergence (not necessarily to consensus) without any additional condition, and then provide sufficient and
(generically) necessary conditions for the limit values of any two agents to agree. 
In contrast, existing  results show \emph{convergence to consensus} under some
fairly strong assumptions about persistent %repeated or permanent 
global connectivity, but 
offer no insight on the possible behavior when convergence to consensus
fails to hold.

The fact that our convergence result requires no assumptions other than the cut-balance condition is significant because it allows us to study 
systems for which
the evolution of $a_{ij}(t)$ is a priori unknown,  possibly random or dependent on the vector $x(t)$ itself.
In the latter type of models, with endogenously determined agent interconnections, 
it is essentially impossible to check a priori the connectivity conditions imposed in 
existing results, and such results are therefore inapplicable. 
In contrast, our results apply as long as the cut-balance condition
is satisfied. The advantage of this condition is that it can be often guaranteed a priori, e.g., if the system is naturally type-symmetric.

On the technical side, we note that similar convergence results are available for the special case of discrete-time symmetric or type-symmetric
systems 
\cite{Lorenz:2005,HendrickxBlondel:2006_MTNS,BlondelHendrickxOlshevskyTsitsiklis:2005,Hendrickx:2008phdthesis,LiWang:2004,Moreau:2005,Lorenz:2003diplomathesis},
though they are obtained with a different methodology. 
Discrete time is indeed much simpler because one can exploit the following fact:
either two agents 
interact on a
set of infinite length or they stop interacting after a
certain finite time. 
We will indeed show that such existing discrete-time results can be easily 
extended to the cut-balanced case.

%%%%%%%%%%%%%%%%%%%%%%%%%%%%%%%%%%%%%%%%%%%%%%%%%%%%%%
%%%%%%%%%%%%%%%%%%%%%%%%%%%%%%%%%%%%%%%%%%%%%%%%%%%%%%
\subsection{Outline}
%%%%%%%%%%%%%%%%%%%%%%%%%%%%%%%%%%%%%%%%%%%%%%%%%%%%%%

The remainder of the paper is organized as follows. We state and prove our main results in Section \ref{sec:main_result} and expose several particular classes of cut-balanced dynamics in Section \ref{sec:part_cases}. We demonstrate the application of our results to systems with randomly determined interactions in Section \ref{sec:random}, and to systems with endogenously determined interactions in Section \ref{sec:application_endogenous}. We show an analogous result for discrete-time systems in Section \ref{sec:discrete-time}. We end with some concluding remarks and the discussion of an open problem on the generalization of our results to systems involving a continuum of agents in Section \ref{sec:ccl}. The Appendix  contains the proof of a technical result needed in Section \ref{sec:main_result}.
%%%%%%%%%%%%%%%%%%%%%%%%%%%%%%%%%%%%%%%%%%%%%%%%%%%%%%%%%%%%%%
%%%%%%%%%%%%%%%%%%%%%%%%%%%%%%%%%%%%%%%%%%%%%%%%%%%%%%%%%%%%%%
%%%%%%%%%%%%%%%%%%%%%%%%%%%%%%%%%%%%%%%%%%%%%%%%%%%%%%%%%%%%%%
\section{Main Convergence Result and Proof}\label{sec:main_result}
%%%%%%%%%%%%%%%%%%%%%%%%%%%%%%%%%%%%%%%%%%%%%%%%%%%%%%%%%%%%%%
%%%%%%%%%%%%%%%%%%%%%%%%%%%%%%%%%%%%%%%%%%%%%%%%%%%%%%%%%%%%%%

We now state formally our main theorem, based on an
integral formulation of the agent dynamics.
The integral formulation avoids issues related
to the existence of derivatives, while allowing for discontinuous 
coefficients $a_{ij}(t)$ and 
possible Zeno behaviors (i.e., a countable number of discontinuities in a finite time interval).

Without loss of generality, we assume that $a_{ii}(t)=0$ for all $t$.
We define a directed  graph,  
$G=(\{1,\dots,n\},E)$, called the \emph{unbounded interactions graph}, by letting $(j,i)\in E$ if $\int_{0}^\infty
a_{ij}(t)\,dt = \infty$.

\begin{lem}
Suppose that Assumption \ref{as:cb} (cut-balance) holds. Every weakly connected component of the unbounded interaction graph $G$ is strongly connected. \TN{Equivalently, if there is a directed path from $i$ to $j$, then there is also a directed path from $j$ to $i$.}
\end{lem}
\begin{proof}
Consider a weakly connected component $W$ of the graph $G$. We assume, in order to derive a contradiction, that $W$ is not strongly connected. Consider the
decomposition of $W$ into strongly connected components. More precisely, we partition the nodes in $W$ into two or more subsets $C_1,C_2,\ldots$, so that each $C_k$ is strongly connected and so that any edge in $W$ that leaves a strongly connected component
leads to a component with a larger label: if $i\in C_k$, $(i,j)\in E$,
and $j\notin C_k$, then $j\in C_l$ for some $l>k$. (This decomposition is unique up to certain permutations of the subset labels.)
In particular, there is no edge from $C_2\cup C_3\cup\cdots$, that leads into $C_1$. Since $W$ is weakly connected, there must therefore exist an
edge from $C_1$ into $C_2\cup C_3\cup\cdots$. However, it is an immediate  consequence of the cut-balance condition (applied to $S=C_1$) that if there is an edge 
that leaves $C_1$ there must also exist an edge that enters $C_1$. This is a contradiction, and the proof is complete.
\end{proof}

The following assumption will be in effect in some of the results.
\begin{assum}\label{as:bi} (Boundedness) 
For every $i$ and $j$, and every $T<\infty$,
$\int_{0}^Ta_{ij}(t)\, dt<\infty$.
\end{assum}

We now state our main result.
\begin{thm}\label{thm:main_convergence}
Suppose that Assumption \ref{as:cb} (cut-balance) holds.
Let $x:\Re^+\to \Re^n$ be a solution to 
the system of integral equations
\begin{equation}\label{eq:evol_x}
x_i(t) = x_i(0) + \int_{0}^t \sum_{j=1}^n a_{ij}(\tau)\prt{x_j(\tau) -
x_i(\tau)}d\tau,
\end{equation}
for $i=1,\ldots,n$. Then:
\begin{enumerate}
\item[(a)]
The limit $x_i^* = \lim_{t\to \infty}x_i(t)$ exists, and\\ $x_i^*\in
[\min_jx_j(0),\max_jx_j(0)]$, for all $i$.
\item[(b)] For every $i$ and $j$, 
$\int_0^\infty a_{ij}(t)\abs{x_j(t)-x_i(t)}\,dt <\infty$. Furthermore, if
$i$ and $j$ belong to the same connected component of $G$,
then $x_i^* = x_j^*$. 
\end{enumerate}
If, in addition, Assumption \ref{as:bi} (boundedness) holds,  then:
\begin{enumerate}

\item[(c)] If $i$ and $j$ are in different connected components of $G$, then $x_i^*\neq x_j^*$, unless $x(0)$ belongs to a particular
$n-1$ dimensional sub-space of $\Re^n$, determined by the
functions $a_{ij}(\cdot)$.
\end{enumerate}
\end{thm}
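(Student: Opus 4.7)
\emph{Part (b).} Integrating the pointwise cut-balance condition gives, for every nonempty proper $S$ and every $T \leq \infty$,
\[
K^{-1}\int_0^T \sum_{i\in S, j\notin S} a_{ji}\,d\tau \;\leq\; \int_0^T\sum_{i\in S, j\notin S} a_{ij}\,d\tau \;\leq\; K\int_0^T \sum_{i\in S, j\notin S} a_{ji}\,d\tau,
\]
so the cumulative cross-flow across $S$ is infinite in one direction iff it is infinite in the other. Let $C$ be a weakly connected component of $G$ and condense it into the DAG of its strongly connected components; if $C$ is not itself strongly connected, pick a source SCC $S_0\subsetneq C$. The source property forbids any $E$-edge into $S_0$ from $C\setminus S_0$, and weak connectedness of $C$ forbids any $E$-edge between $S_0$ and the complement of $C$, so $\int_0^\infty \sum_{i\in S_0, j\notin S_0} a_{ij}\,d\tau<\infty$. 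On the other hand, weak connectedness of $C$ forces at least one $E$-edge from $S_0$ into $C\setminus S_0$, giving $\int_0^\infty \sum_{i\in S_0, j\notin S_0} a_{ji}\,d\tau=\infty$; this contradicts the integrated cut-balance, so $C$ must be strongly connected.

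\emph{Part (a).} Following the hint I would introduce, for each $m\in\{1,\dots,n\}$, the weighted partial sum
\[
y_m(t) \;:=\; \sum_{k=1}^{m} K^{\,m-k}\, x_{(k)}(t),
\]
where $x_{(1)}(t)\leq\dots\leq x_{(n)}(t)$ is the sorted profile of agent values. Boundedness of $y_m$ follows from the extremal principle ($\min_j x_j(\cdot)$ nondecreasing, $\max_j x_j(\cdot)$ nonincreasing). For monotonicity, at any time of strict ordering I relabel agents so that $x_1<\dots<x_n$, set $c_p:=K^{m-p}$ for $p\leq m$ and $c_p:=0$ otherwise, and use an Abel-type resummation:
\[
\dot y_m \;=\; \sum_{k=1}^{n-1}\Delta_k\, D_k, \qquad \Delta_k:=x_{k+1}-x_k\geq 0,\ D_k:=\sum_{p\leq k<q}(c_p a_{pq}-c_q a_{qp}).
\]
For $k\geq m$ every surviving term in $D_k$ is non-negative. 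For $k<m$ I use cut-balance on $S=\{1,\dots,k\}$ together with $c_p\geq K^{m-k}$ for $p\leq k$ and $c_q\leq K^{m-k-1}$ for $k<q\leq m$ (and $c_q=0$ for $q>m$) to get
\[
D_k \;\geq\; K^{m-k}\sum_{p\leq k,\,q>k} a_{pq} \;-\; K^{m-k-1}\sum_{p\leq k,\,q>k} a_{qp} \;\geq\; \bigl(K^{m-k}-K^{m-k-1}\cdot K\bigr)\sum_{p\leq k,\,q>k} a_{pq} \;=\; 0.
\]
So each $y_m$ is nondecreasing and bounded, hence converges. The telescoping identity $x_{(m)}=y_m-Ky_{m-1}$ (with $y_0:=0$) then gives convergence of each sorted value. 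Finally, each continuous $x_i(\cdot)$ takes values in the finite set $\{x_{(1)}(\cdot),\dots,x_{(n)}(\cdot)\}$ whose elements converge, and since the accumulation-point set of a continuous function is connected by the intermediate value theorem, it must be a singleton, so each $x_i(t)$ converges. The stated bounds on $x_i^*$ are immediate from the extremal principle.

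\emph{Part (c).} Since $y_m$ is monotone and bounded, $\int_0^\infty \dot y_m\,dt<\infty$; since $\dot y_m=\sum_k\Delta_k D_k$ is a sum of non-negative terms, $\int_0^\infty \Delta_k D_k\,dt<\infty$ for every $k$. Choosing $m=k$ yields $c_p=K^{k-p}\geq1$ for $p\leq k$, so $D_k\geq\sum_{p\leq k,q>k} a_{pq}$; in particular, for any fixed agent pair $(\alpha,\beta)$, at every time for which the sorted ranks satisfy $\mathrm{rk}_\alpha\leq k<\mathrm{rk}_\beta$ one has $\Delta_k D_k\geq\Delta_k\,a_{\alpha\beta}$. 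Using $|x_\alpha-x_\beta|=\sum_k\Delta_k\,\mathbf{1}\{k\text{ separates }\alpha,\beta\}$ and summing over $k$,
\[
\int_0^\infty a_{\alpha\beta}(t)\,\abs{x_\alpha(t)-x_\beta(t)}\,dt \;\leq\; \sum_{k=1}^{n-1}\int_0^\infty \Delta_k(t)D_k(t)\,dt \;<\;\infty,
\]
with Assumption~\ref{as:bi} used to ensure the integrals are well-defined on compact intervals. Now, if $i$ and $j$ lie in the same connected component of $G$, part~(b) provides a directed path $i=i_0,\dots,i_r=j$ in $G$; each step has $\int_0^\infty a_{i_\ell i_{\ell-1}}\,d\tau=\infty$, and combined with the summability bound this forces $\liminf_t|x_{i_\ell}(t)-x_{i_{\ell-1}}(t)|=0$, so by existence of limits $x_{i_\ell}^*=x_{i_{\ell-1}}^*$. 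Composing along the chain yields $x_i^*=x_j^*$.

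\emph{Part (d).} Linearity of~\eqref{eq:evol_x} in~$x$ together with the convergence proved in~(a) makes the map $x(0)\mapsto x^*$ linear, so $x_i^*-x_j^*=\scalprod{v_{ij}}{x(0)}$ for some $v_{ij}\in\Re^n$; once $v_{ij}\neq 0$, the exceptional set $\{x(0):x_i^*=x_j^*\}$ is exactly the hyperplane $v_{ij}^\perp$ of dimension $n-1$, as asserted. The task therefore reduces to exhibiting, for $i\in C$ and $j\in C'$ with $C\neq C'$, one initial condition on which $x_i^*\neq x_j^*$. I would take $x(0)=\mathbf{1}_C-\mathbf{1}_{C'}$: in the \emph{decoupled} system obtained by zeroing every $a_{kl}$ with $k,l$ in distinct connected components, this initial condition is stationary, so $x_i\equiv 1$ and $x_j\equiv -1$. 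Reintroducing the suppressed coefficients, which satisfy $\int_0^\infty a_{kl}\,dt<\infty$ by definition of $G$ and Assumption~\ref{as:bi}, only adds an $L^1$-integrable forcing to the decoupled equation, and the $\ell_\infty$-contractivity of the Laplacian semigroup together with a Gr\"onwall-type estimate should keep the actual trajectory close to the stationary one, so $|x_i^*-x_j^*|$ remains strictly positive and $v_{ij}\neq 0$. The main obstacle is making this quantitative: within each component the dynamics need not preserve any naive positive weighted mean, so the comparison between the decoupled and actual limits must be carried out in terms of the weighted invariants $y_m$ of part~(a) rather than a crude mass-conservation argument, in order to ensure that the perturbation does not close the $\Omega(1)$ gap between the decoupled limits.
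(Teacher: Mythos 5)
Your parts (a)--(c) follow essentially the paper's route: your $y_m$ is just $K^m S_m$ with $S_m=\sum_{k\le m}K^{-k}x_{(k)}$ the paper's monotone functional, your lower bound on $D_k$ is the paper's key lemma in Abel-summed form, and (b) and the consensus conclusion of (c) are the same. Two gaps remain there. First, your monotonicity argument is carried out ``at any time of strict ordering,'' which tacitly assumes differentiability and finitely many order changes. In the actual setting $x$ is only a solution of an integral equation with measurable coefficients, so it is merely absolutely continuous and the ordering can change infinitely often in finite time; one must first prove that the sorted vector itself satisfies $x_{(k)}(t)=x_{(k)}(0)+\int_0^t v_{p_k(\tau)}(\tau)\,d\tau$ before the inequality $D_k\ge 0$ yields that $y_m$ is nondecreasing. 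The paper devotes its entire appendix (Proposition 4, proved by induction on $n$ via a decomposition of $\{x_i>x_j\}$ into countably many intervals) to exactly this point and remarks that no simple proof is known; your plan does not address it. Second, a smaller omission in (c): with $m=k$ the coefficient of $a_{qp}$ (for $q>k\ge p$) in $D_k$ is zero, so your bound controls $\int a_{\alpha\beta}|x_\alpha-x_\beta|$ only over times when $x_\alpha\le x_\beta$; the complementary times need the mirror functional built from the largest components (the paper's ``symmetrical argument'').

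The genuine gap is part (d), and you have correctly flagged it yourself. Starting at $t=0$ from $x(0)=\mathbf{1}_C-\mathbf{1}_{C'}$ and treating the cross-component coupling as an $L^1$ perturbation cannot close: the hypothesis only gives $\int_0^\infty\sum_{i\in V_1,\,j\in V_2}\bigl(a_{ij}+a_{ji}\bigr)\,dt<\infty$, not small, so any Gr\"onwall-type estimate bounds the drift of each block by a constant times this integral, which may exceed the initial gap of $2$; moreover nothing prevents this particular $x(0)$ from lying in the exceptional hyperplane, in which case no comparison argument can rescue it. The paper's fix is a time shift combined with invertibility of the flow: pick $t_{1/4}$ so that $\int_{t_{1/4}}^\infty\sum_{i\in V_1,\,j\in V_2}\bigl(a_{ij}+a_{ji}\bigr)\,d\tau<1/4$ (possible precisely because the cross integrals are finite), use Assumption 2 to conclude that the state-transition matrix from $0$ to $t_{1/4}$ is invertible, and choose $x(0)$ so that $x(t_{1/4})$ equals $0$ on $V_1$ and $1$ on $V_2$; a direct estimate on the sorted dynamics then keeps $y_m\le 1/4$ and $y_{m+1}\ge 3/4$ for all later times, with no perturbation or weighted-invariant comparison needed. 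Once that single initial condition is exhibited, your linearity argument for the $(n-1)$-dimensional exceptional set is exactly the paper's.
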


The proof of convergence  relies on the fact that a particular
weighted sum of the $m$ smallest components $x_i(t)$ is nondecreasing. 
In
order to provide the intuition behind this proof, we first sketch 
the argument 
for the special case where: (i) the functions $x_i(\cdot)$ are differentiable, (ii)
the ordering of the components $x_i(t)$ changes at most a finite number of times during any
finite time-interval, and (iii) the coefficients satisfy
the stronger type-symmetry assumption: $0\leq K^{-1}a_{ji}(t) \leq
a_{ij}(t) \leq K a_{ji}(t).$

For any time $t$, let $y_1(t)$ be the smallest of the components $x_i(t)$, $y_2(t)$
the second smallest, etc., and let us define 
$S_m(t) = \sum_{i=1}^m K^{-i}y_i(t)$. Observe that the $y_i(\cdot)$ are continuous.
Indeed, any two components $x_i(\cdot)$ whose order is reversed at some time,
must be equal at that time, by continuity of $x(\cdot)$. As a
result, $S_m(\cdot)$ is also continuous.

Since we assume that the ordering of the $x_i(t)$ changes at most
a finite number of times during a finite time-interval, there exists an
increasing and divergent sequence $t_0,t_1,t_2,\dots$ such that the ordering  remains constant on any $(t_k,t_{k+1})$: It suffices to let the $t_k$ be the times at which the order changes, and to complete the sequence by an arbitrary diverging sequence if there are only finitely many order changes.
We consider such an
interval, on which we assume without loss of generality that the
$x_i(t)$ are already sorted, in nondecreasing order. Thus, for any $t\in (t_k,t_{k+1})$, and
$i>j$, we have $x_i(t)\geq x_j(t)$. Also, 
$y_i(t)=x_i(t)$,  and thus $S_m(t) = \sum _{i=1}^m K^{-i}\jnt{x}_i(t)$.
It follows from (\ref{eq:system_diff}) that, for $t\in
(t_k,t_{k+1})$,
\begin{align*}
\frac{d}{dt}S_m(t) &= \sum_{i=1}^m K^{-i}\frac{d}{dt} x_i(t)\\ &= \sum_{i=1}^m
K^{-i}\Big(\sum_{j=1}^n a_{ij}(t)\big(x_j(t)-x_i(t)\big)
\Big).
\end{align*}
This can be rewritten as
\begin{align*}
\frac{d}{dt}S_m(t) =& \sum_{i=1}^m \sum_{j=1}^m K^{-i} a_{ij}(t)
\prt{x_j(t)-x_i(t)} \\ +& \sum_{i=1}^m \sum_{j=m+1}^n K^{-i} a_{ij}(t)
\prt{x_j(t)-x_i(t)}.
\end{align*}
The second term on the right-hand side is always nonnegative, because the coefficients
$a_{ij}(t)$ are nonnegative and because we have $x_j(t)-x_i(t)\geq 0$ when $j>m\geq i$. By rearranging the first term, we
obtain 
\begin{equation}\label{eq:deriv_S_m geq}
\frac{d}{dt}S_m(t) \geq 
\sum_{i=1}^{m-1} \sum_{j=i}^m
\prt{K^{-i}a_{ij}(t)- K^{-j}a_{ji}(t) }\prt{x_j(t)-x_i(t)}.
\end{equation}
Recall the assumption $a_{ij}(t) \geq K^{-1}a_{ji}(t)$ for any $i,j$.
It implies that $$K^{-i}a_{ij}(t)- K^{-j}a_{ji}(t)\geq
a_{ji}(t)(K^{-(i+1)}-K^{-j}).$$ When $j> i$, as in
(\ref{eq:deriv_S_m geq}), this quantity is nonnegative, and so is
$x_j(t)-x_i(t)$. It then follows from (\ref{eq:deriv_S_m geq})
that $\frac{d}{dt}S_m(t) \geq 0$ for all $t\in (t_k,t_{k+1})$.

Since our argument is valid for any interval $(t_k,t_{k+1})$, and
since the sequence $t_0,t_1,\dots$ diverges, this implies that
$S_m(\cdot)$ is nondecreasing. Observe now that $x_i(t) \leq \max _j
x_j(0)$, for all $i$ and $t\geq 0$, because equation (\ref{eq:evol_x}) and the non-negativity of the $a_{ij}$ imply that $\max_i x_i(t)$ is nonincreasing. Thus, $S_m(t)$ is bounded above,
and therefore convergent, for any $m$. As a result, all $y_i(t)$
converge, i.e., the smallest entry of $x_i(t)$ converges, the
second smallest converges, etc. A continuity
argument can then be used to show the convergence of each $x_i(t)$.

The proof for the general case relies on the following lemma on the rate of change of the weighted sums $S_m(t)$, when the
coefficients $a_{ij}$ satisfy the cut-balance assumption. We say that a vector $y\in\Re^n$ is \emph{sorted} if $y_1\leq y_2\leq \dots
\leq y_n$. 

\begin{lem}\label{lem:deriv_weighted_avg_cutbal}
For $i,j=1,\ldots,n$, $i\neq j$, let $b_{ij}$ be nonnegative coefficients that satisfy the cut-balance condition
$$K^{-1}\sum_{i\in
S}\sum_{j\notin S}b_{ji} \leq \sum_{i\in S}\sum_{j\notin S}b_{ij}
\leq K\sum_{i\in S}\sum_{j\notin S}b_{ji},
$$
for some $K\geq 1$, and for every nonempty proper subset $S$ of $\{1,\ldots,n\}$.
Then,
$$
\sum_{i=1}^mK^{-i}\prt{\sum_{j=1}^n b_{ij}(y_j-y_i)}\geq 0
$$
for every sorted vector $y\in \Re^n$, and every $m\leq n$.
\end{lem}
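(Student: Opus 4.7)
}
The sketch for the type-symmetric case exploited the pointwise inequality $K^{-i}b_{ij}\geq K^{-j}b_{ji}$ for $i<j$, which is no longer available under the weaker, aggregate, cut-balance hypothesis. My plan is therefore to reorganize the sum so that the only inequalities invoked are of the cut-balance type, applied to the specific sets $I_k=\{1,\dots,k\}$.

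The key trick is to express the differences $y_j-y_i$ in terms of the consecutive gaps $d_k=y_{k+1}-y_k\geq 0$ (this is where sortedness is used). Namely, for any $i,j$,
$$
y_j-y_i=\sum_{k=1}^{n-1}d_k\bigl(\mathbf{1}[i\leq k,\,j>k]-\mathbf{1}[j\leq k,\,i>k]\bigr).
$$
Substituting into $T_m:=\sum_{i=1}^mK^{-i}\sum_{j=1}^nb_{ij}(y_j-y_i)$ and interchanging the order of summation, I obtain $T_m=\sum_{k=1}^{n-1}d_k A_k$, where
$$
A_k=\sum_{i\leq\min(m,k),\,j>k}K^{-i}b_{ij}\;-\;\sum_{k<i\leq m,\,j\leq k}K^{-i}b_{ij}.
$$
Since $d_k\geq 0$, it suffices to show $A_k\geq 0$ for each $k$. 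For $k\geq m$ the second sum is empty and the first is manifestly nonnegative.

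The main work is the case $k<m$, and this is where the cut-balance condition comes in. Using $K^{-i}\geq K^{-k}$ on the first sum and $K^{-i}\leq K^{-(k+1)}$ on the second (recall $K\geq 1$), together with $\sum_{k<i\leq m,\,j\leq k}b_{ij}\leq\sum_{i>k,\,j\leq k}b_{ij}$, I get
$$
A_k\;\geq\;K^{-k}\!\!\sum_{i\leq k,\,j>k}\!\!b_{ij}\;-\;K^{-(k+1)}\!\!\sum_{i>k,\,j\leq k}\!\!b_{ij}.
$$
The two remaining sums are exactly the two cross-cut quantities associated with $S=I_k$ in Assumption \ref{as:cb}, and the latter is bounded above by $K$ times the former. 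Thus the second term is at most $K^{-(k+1)}\cdot K\cdot(\text{first cross-cut})=K^{-k}(\text{first cross-cut})$, so $A_k\geq 0$, completing the proof. The only nontrivial step is the initial reorganization: once $T_m$ is written as $\sum_k d_k A_k$ with the sets $I_k$ appearing naturally, cut-balance applies directly and the argument becomes a one-line comparison.
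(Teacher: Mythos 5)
Your proof is correct and is essentially the paper's argument in different notation: expanding $y_j-y_i$ into the consecutive gaps $d_k$ and interchanging sums is exactly the Abel summation used in the paper, and your $A_k$ coincides with the partial sum $\sum_{i=k+1}^n q_i$ there, bounded by the same weight-monotonicity estimates plus cut-balance on $S=\{1,\dots,k\}$. The only (minor) simplifications are that the gap decomposition is translation-invariant by construction, so you never need the paper's step $\sum_i q_i=0$, and you work directly with the truncated weights $K^{-i}\mathbf{1}[i\le m]$ rather than proving the general weighted statement first.
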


\begin{proof}
We prove the stronger result that
\TN{$
\sum_{i=1}^{n}w_i\prt{\sum_{j=1}^n b_{ij}(y_j-y_i)}\geq 0
$}
for any nonnegative weights $w_i$ such that $w_i\geq K w_{i+1}$
for $i=1,\dots,n-1$. Observe that the expression above can be
rewritten as
$$
\sum_{i =1}^n y_i\prt{\sum_{j=1}^n
w_jb_{ji}-\sum_{j=1}^nw_ib_{ij}}=\sum_{i =1}^n y_i q_i,
$$
where the last equality serves as the definition of $q_i$.
We observe that
$$\sum_{i =1}^n y_i q_i = y_1\sum_{i=1}^n q_i
+\sum_{k=1}^{n-1}\Big( (y_{k+1}-y_{k})\sum_{i=k+1}^n q_i\Big).$$
It follows 
that the desired inequality $\sum_{i=1}^n y_i q_i \geq 0$
holds for every sorted vector $y$ 
if and only if (i) $\sum_{i=1}^n q_i= 0$,  and (ii) $\sum_{i=k+1}^n q_i\geq 0$, for 
$k=1,\dots, n-1$. We have 
\begin{align*}
\sum_{i =1}^n q_i&=
\sum_{i =1}^n \prt{\sum_{j=1}^n w_jb_{ji}-\sum_{j=1}^nw_ib_{ij}}\\&
= \sum_{i=1}^n\sum_{j=1}^nw_jb_{ji}-
\sum_{j=1}^n\sum_{i=1}^nw_ib_{ij} \\&=0,
\end{align*}
which establishes property (i). To establish property (ii), we observe that 
\begin{align*}
\sum_{i = k+1}^n  \Big(\sum_{j=1}^n
w_jb_{ji} &-\sum_{j=1}^n w_i b_{ij} \Big) = \\ &
\sum_{i=k+1}^n\sum_{j=1}^k w_j b_{ji} +
\sum_{i=k+1}^n\sum_{j=k+1}^n w_j b_{ji} \\  - &\sum_{i=k+1}^n\sum_{j=1}^k
w_i b_{ij} - \sum_{i=k+1}^n\sum_{j=k+1}^n w_i b_{ij}.
\end{align*}
The second and the fourth terms cancel each other. We use the inequality
$w_j\geq w_{k}$ for $j\leq k$ in the first term, and the inequality
$w_i\leq w_{k+1}$ for $i\geq k+1$ in the third term, to obtain
\begin{align*}
\sum_{i =k+1}^n q_i &=
\sum_{i = k+1}^n \prt{\sum_{j=1}^n
w_jb_{ji}-\sum_{j=1}^nw_ib_{ij}} \\ &\geq w_k
\sum_{i=k+1}^n\sum_{j=1}^kb_{ji} -
w_{k+1}\sum_{i=k+1}^n\sum_{j=1}^kb_{ij}.
\end{align*}
Using  the property $w_k\geq
Kw_{k+1}$, and the cut-balance assumption, we conclude 
that the right-hand side in the above inequality is nonnegative, which completes
the proof of property (ii).
We now let $w_i=K^{-i}$, for $i=1,\ldots,m$, and $w_i=0$ for $i>m$, to obtain the desired result.
\end{proof}

We now prove Theorem \ref{thm:main_convergence}.
%%%%%%%%%%%%%%%%%%%%%%%%%%%%%%%%%%%%%%%%%%%%%%%%%%%%%%%%%%%%%%%%%%%%%%%%%%%%%%%%%%%%%%%%
\begin{proof}[Proof (of Theorem 1)]
For every $t$, we define a permutation
$p(t)$ of the indices $\{1,\dots,n\}$ 
which sorts the components of the vector $x(t)$. 
(More precisely, it sorts the pairs $(x_i(t),i)$ in lexicographic order.) 
In particular, $p_i(t)$ is the index of the $i$th smallest component
of $x(t)$, with ties broken according to the original indices of the
components of $x(t)$.
Formally,  if $i <j$, then either $x_{p_i(t)}(t)< x_{p_j(t)}(t)$ or
$x_{p_i(t)}(t)= x_{p_j(t)}(t)$ and $p_i(t) < p_j(t)$. For every
$i$, we then let $y_i(t) = x_{p_i(t)}(t)$. The vector $y(t)$ is thus
sorted, so that 
$y_i(t)\leq y_j(t)$ for $i <j$. Let also $b_{ij}(t) =
a_{p_i(t)p_j(t)}(t)$. 
(This coefficient captures an interaction between the $i$th smallest and the $j$th smallest component of $x(t)$.) Proposition \ref{prop:sort}, proved in the Appendix, states that $y(t)$ satisfies an equation of the same form as \eqref{eq:evol_x}:
\begin{equation}\label{eq:evol_y}
y_i(t) = y_i(0) + \int_{0}^t \sum_{j=1}^n b_{ij}(\tau) \prt{y_j(\tau)
-y_i(\tau)}  d\tau, %\qquad 
\end{equation}
\rrev{for $i=1,\ldots,n.$} The definition of the functions $b_{ij}$ implies that
they also satisfy the cut-balance condition (\ref{eq:cut_balance_intro}). We
now define $S_m(t)$ to be a weighted sum of the values of the
first  $m$ (sorted) agents:
\begin{align*}
S_{m}(t) &= \sum_{i=1}^m K^{-i}y_{i}(t)\\&= S_m(0) + \int_0^t
{\sum_{i=1}^mK^{-i} \sum_{j=1}^n b_{ij}(\tau) \prt{y_j(\tau)-y_i(\tau)}
}\, d\tau.
\end{align*}
It follows from Lemma \ref{lem:deriv_weighted_avg_cutbal},
that the integrand is always nonnegative, so that
$S_m(t)$ is nondecreasing. 
Moreover, since all $b_{ij}(t)$ are nonnegative, Eq.\ 
(\ref{eq:evol_y}) can be used to show that $y_1(0) \leq y_i(t) \leq y_n(0)$, 
for all $i$ and $t$. In particular, each $S_m(t)$ is bounded above 
and therefore converges. This implies, using induction on $i$, that every $y_i(t)$
converges to a limit $y_i^*=\lim_{t\to \infty}y_i(t) \in
[y_1(0),y_n(0)] = [\min_jx_j(0),\max_jx_j(0)]$. Using the continuity of
$x$ and the definition of $y$, it is then an easy exercise to show that each $x_i(t)$ must also converge to one of the values 
$y_j^*$. This concludes the proof of part (a) of the theorem.

%%%%%%%%%%%%%%%%%%%%%%%%%%%%%%%%%%%%%%%%%%%%%%%%%%%%%%%%%%%%%%%%%%%%%%%%%%%%%%%%%%%%%%%%
We \TN{now prove part \rrev{(b)}.}
For every $m=1,\dots,n$, since \rrev{$S_m(t)$} converges to some $S_m^*$, we have  
\begin{equation}\label{eq:variation_S_m_av}
 \int_0^\infty {\sum_{i=1}^mK^{-i} \sum_{j=1}^n b_{ij}(t) \prt{y_j(t)-y_i(t)}\,
} dt  = S_m^*-S_m(0) <\infty.
\end{equation}
The integrand in this expression can be  rewritten as
\begin{align} 
\sum_{i=1}^mK^{-i} \Bigg(&\sum_{j=1}^m b_{ij}(t) \prt{y_j(t)-y_i(t)} \notag\\
 &+\sum_{j=m+1}^n b_{ij}(t) \prt{y_m(t)-y_i(t)} \Bigg) \notag\\
 + \sum_{i=1}^mK^{-i} &\sum_{j=m+1}^n  b_{ij}(t) \prt{y_j(t)-y_m(t)} 
\end{align} 
It follows from Lemma \ref{lem:deriv_weighted_avg_cutbal} applied to the coefficients $b_{ij}$ and the  sorted vector $(y_1(t),y_2(t),\dots,y_m(t),$ $ y_m(t),\dots,y_m(t))$ that the first term in the sum above is nonnegative, and thus that 
\begin{align*}
&\sum_{i=1}^mK^{-i} \sum_{j=m+1}^n b_{ij}(t) \prt{y_j(t)-y_m(t)}\\
\leq
&\sum_{i=1}^mK^{-i} \sum_{j=1}^n b_{ij}(t) \prt{y_j(t)-y_i(t)}.
\end{align*}
Equation (\ref{eq:variation_S_m_av}) then implies that
$$
\int_0^\infty\sum_{i=1}^m \sum_{j=m+1}^n K^{-i}b_{ij}(t) \prt{y_j(t)-y_m(t)} dt < \infty.
$$
Since  $K^{-i}b_{ij}(t)\geq 0$ and $y_j(t) \geq y_m(t)$ for $j>m$, every term of the sum in the integrand above is nonnegative, for every $t$. Then,  the boundedness of the integral implies the boundedness of every $\int_0^\infty b_{ij}(t) \prt{y_j(t)-y_m(t)} dt= \int_0^\infty b_{ij}(t) \abs{y_j(t)-y_m(t)} dt $ when $m<j$. A symmetrical argument shows that $\int_0^\infty b_{ij}(t) \abs{y_j(t)-y_m(t)} dt$ is also bounded when $j<m$, so that
\begin{equation}\label{eq:sum_y_bounded}
\int_0^\infty\sum_{i=1}^n\sum_{j=1}^nb_{ij}(t) \abs{y_j(t)-y_i(t)} dt<\infty.
\end{equation}
Because of the definitions $y_i(\tau) = x_{p_i(\tau)}(\tau)$ and
$b_{ij}(\tau) = a_{p_i(\tau)p_j(\tau)}(\tau)$, and the fact that $p(t)$ is a permutation, the equality
$$
\sum_{i=1}^n\sum_{j=1}^na_{ij}(t) \abs{x_j(t)-x_i(t)} = \sum_{i=1}^n\sum_{j=1}^nb_{ij}(t) \abs{y_j(t)-y_i(t)}
$$
holds for all $t$, which together with the nonnegativity of all $a_{ij}(t) \abs{x_j(t)-x_i(t)}$ and (\ref{eq:sum_y_bounded}) implies that $\int_0^\infty a_{ij}(t)\abs{x_j(t)-x_i(t)}dt <\infty$ for all $i,j$. 

Suppose now that the edge $(j,i)$ is in the graph $G$, i.e., that $\int_0^\infty a_{ij}(t)\,dt = \infty$. From part (a), we know that $\abs{x_i(t)-x_j(t)}$ converges to a constant value for every $i,j$.  Assumption   \ref {as:bi}
 ($\int_0^{t'}a_{ij}(t)\, dt<\infty$ for all $t'$) and the fact that $\int_0^\infty a_{ij}(t)\abs{x_j(t)-x_i(t)}dt <\infty$ imply that the value to which $\abs{x_i(t)-x_j(t)}$ converges must be 0, and thus that $x_i^ * = x_j^*$. 
If $i$ and $j$ are not directly connected, i.e., $(j,i)$ is not and edge in $G$,  but belong to the same connected component of $G$, the equality $x_i^ * = x_j^*$ follows by using transitivity along a path from $i$ to $j$.

%%%%%%%%%%%%%%%%%%%%%%%%%%%%%%%%%%%%%%%%%%%%%%%%%%%%%%%%%%%%%%%%%%%%%%%%%%%%%%%%%%%%%%%%
It remains to prove part (c). Consider a partition of the agents
in two groups, $V_1$ and $V_2$, that are disconnected in $G$, i.e., $\lim_{t\to
\infty}\int_{0}^t a_{ij}(\tau)\,d\tau < \infty$ for all $i\in V_1$ and $j\in V_2$,
and also for all $i\in V_2$ and $j\in V_1$. Thus, 
there exists some $t_{1/4}$ such that
$
\int_{t_{1/4}}^\infty\sum_{i\in V_1,j\in V_2} \prt{a_{ij}(\tau) +
a_{ji}(\tau)}d\tau < \frac{1}{4}.
$
We will first show that there exists a full-dimensional set of initial vectors $x(0)$ for which $\lim_{t\to\infty} x_i(t)\neq \lim_{t\to\infty} x_j(t)$,
when $i\in V_1$ and $j\in V_2$.

Since $\int_0^t a_{ij}(\tau)\,d\tau < \infty$, it can be proved that the system in Eq.\ \eqref{eq:evol_x} admits a unique solution and that the state transition (or fundamental) matrix, which maps the initial conditions $x(0)$ to $x(t)$, has full rank for any finite $t$; see \cite{Sontag:1998}, for example (specifically, Theorem 54, Proposition C3.8, appendix C3, and appendix C4). In particular, we can chose $x(0)$ such that $x_i(t_{1/4}) = 0$ if $i\in V_1$, and
$x_i(t_{1/4}) = 1$ if $i\in V_2$. Let $m$ be the number of agents
in $V_1$, and let $y$ be the sorted version of $x$ as above. There holds
$y_1(t_{1/4}) = \dots = y_m(t_{1/4}) = 0$ and $y_{m+1}(t_{1/4}) =
\dots = y_n(t_{1/4}) = 1$. 

Consider now a $t^*\geq t_{1/4}$ such that $y_m(t) < y_{m+1}(t)$ holds for all $t\in [t_{1/4},t^*]$. The continuity of $x$ and the definition of $y$ implies that for all $t\in [t_{1/4},t^*]$, we have $x_i(t) \leq y_m(t)$ for every $i\in V_1$, and $x_j(t) \geq y_{m+1}(t)$ for every $j \in V_2$.  In the same time interval, the permutation $p(t)$ that maps the indices of $x(t)$ to the corresponding indices of $y(t)$ takes thus values smaller than or equal to $m$ for indices $i\in V_1$ and larger than $m$ for indices $j\in V_2$. As a result, 
$\sum_{i\leq m, j>m}\prt{b_{ij}(t) +b_{ji}(t)}  = \sum_{i\in V_1, j\in V_2}\prt{a_{ij}(t) +a_{ji}(t)} $ for all $t\in  [t_{1/4},t^*]$. The definition of $t_{1/4}$ and the nonnegativity of the $a_{ij}(t)$ imply that, for all $t$ in that interval,
\begin{align}\label{eq:bounds_on_int_b}
\int_{t_{1/4}}^{t} \sum_{i=1}^{m}&\sum_{j=m+1}^n  \prt{b_{ij}(\tau) +b_{ji}(\tau)}d\tau \notag \\&= \int_{t_{1/4}}^{t}  \sum_{i\in V_1, j\in V_2}\prt{a_{ij}(\tau) +a_{ji}(\tau)}d\tau 
< \frac{1}{4}.
\end{align}

We now fix an arbitrary $t\in [t_{1/4},t^*]$, and show that $y_m(t)$ and $y_{m+1}(t)$ remain separated by at least $1/2$. 
Using Eq.\ (\ref{eq:evol_y}) and $y_m(t_{1/4})=0$,  we see that
\begin{align}\label{eq:above}
y_m(t)  & = \int_{t_{1/4}}^t \sum_{j=1}^mb_{mj}(\tau)
\prt{y_j(\tau) - y_m(\tau)}d\tau \notag \\ &+ \int_{t_{1/4}}^t \sum_{j=
m+1}^n b_{mj}(\tau) \prt{y_j(\tau) - y_m(\tau)}  d\tau.
\end{align}
The first term is non-positive by the definition of $y$. Consider now
the second term. Since $y_i(t_{1/4})\in [0,1]$ for all $i$, we {obtain}  
$y_i(\tau)\in [0,1]$ for all $t\geq t_{1/4}$, so that $y_j(\tau) -
y_m(\tau) \leq 1$ for every $j$.  Equations \eqref{eq:above} and \eqref{eq:bounds_on_int_b} then imply that 
$$
y_m(t) \leq   \int_{t_{1/4}}^t\sum_{j= m+1}^nb_{mj}(\tau)
\, d\tau< \frac{1}{4},
$$
for every $t \in  [t_{1/4},t^*]$. A similar argument shows that
$y_{m+1}(t)> 3/4$ for all $t\in  [t_{1/4},t^*]$. 
Recalling the definition of $t^*$, we have essentially proved that, after time $t_{1/4}$ and as long as $y_{m}(t)<y_{m+1}(t)$, we must have $y_{m+1}(t)-y_m(t)>1/2$. Because $y(t)$ is continuous, it follows easily that the inequality $y_{m+1}(t)-y_m(t)>1/2$ must hold for all times. Since we have seen that, for $t \in  [t_{1/4},t^*]$ , there holds  $x_j(t)> 3/4$ for all $j\in V_2$ and $x_i(t)< 1/4$  for all $i\in V_1$, this implies that  $x_i^*= \lim_{t\to\infty }x_i(t) \leq 1/4$ for $i\in V_1$ and $x_j^*=\lim_{t\to\infty }x_j(t) \geq 3/4$ for $j\in V_2$.

Note that the function that maps the initial condition $x(0)$ to $x^* =\lim_{t\to\infty}x(t)$ is linear; let $L$ be the matrix that represents this linear mapping.
We use $e_i$ to denote the {$i$th unit} vector in $\Re^n$.
We have shown above that if $i$ and $j$ belong to different connected components
of $G$, there exists at least one $x(0)$ for which
$x_i^*-x_j^* = (e_i-e_j)^TLx(0) \neq 0.$ Therefore,
$(e_i-e_j)^TL\neq 0$. In particular, the set of initial conditions $x(0)$ for
which $(e_i-e_j)^TLx(0) = 0$ is contained in an $n-1$ dimensional subspace of
$\Re^n$, which establishes part \rrev{(c)} of the theorem.
\end{proof}

We note that Theorem \ref{thm:main_convergence} 
has an analog for the case where each agent's value $x_i(t)$ is actually a multi-dimensional vector, obtained by applying Theorem 
\ref{thm:main_convergence} separately to each component.

The key assumption in Theorem \ref{thm:main_convergence},
which allows us to prove the convergence of the
$x_i$, is that the aggregate influence of a group of agents on the others
remains within a constant factor of the reverse aggregate influence.
As we will see in Section \ref{sec:part_cases} (Proposition \ref{cor:stronger_condition}),
the cut-balance assumption is a generalization (weaker version) of a more local type-symmetry condition. The latter condition requires that
$a_{ij}(t)$ be positive if and only if $a_{ji}(t)$ is positive,
and that the ratio of these two quantities be bounded by $K$. (A system satisfying the type-symmetry condition with $K$ automatically satisfies the cut-balance condition with the same $K$.) 
The requirement that the ratio be bounded is essential. 
To demonstrate this, we present an example where $a_{ij}(t) > 0$ whenever $a_{ji}(t) >0$, but for which convergence fails to hold.

\begin{exmp}\label{ex:1}
\rm
Let $n=3$ and consider the trajectories
\begin{equation}\label{eq:three_agents}
x_1(t) = 3 + e^{-t},\e x_2(t) = \sin(t), \e x_3(t) = - 3 -
e^{-t}.\end{equation}
These trajectories are a solution to  
(\ref{eq:system_diff}), for the case where
$$
a_{12}(t) = \frac{1}{e^t(3-\sin(t))+1}, \e a_{32}(t) =
\frac{1}{e^t(3+\sin(t))+1},
$$
$$
a_{21}(t) = \frac{1}{2} + \frac{\sin(t)+\cos(t)}{6+2e^{-t}},\e
a_{23}(t) = \frac{1}{2} - \frac{\sin(t)+\cos(t)}{6+2e^{-t}},
$$
and $a_{13}(t) = a_{31}(t)= 0$. 
Note that the trajectory ($x_2(t)$, in particular) does not converge. 
This system satisfies a weak
form of type-symmetry:  an agent cannot
influence another without being itself influenced in return, but
the ratio between these two influences can grow unbounded.
\hfill $\square$
\end{exmp}

One might speculate that the failure to converge in Example \ref{ex:1}
is due to the exponential growth of some of the 
ratios $a_{ij}(t)/a_{ji}(t)$, and that convergence might
still be guaranteed if these ratios were bounded by a slowly growing  function of time. However, 
this is not the case either: an example with fast-growing ratios is equivalent to one with slow-growing ratios, once we rescale the time axis.

More concretely, let $g(t)$ be a nonnegative increasing function that grows slowly to infinity, and let $z(t)=x(g(t))$, where $x(\cdot)$ is as in the preceding example. Then, $z(t)$ satisfies 
\eqref{eq:system_diff} with new coefficients
$\overline{a}_{ij}(t) = \dot g(t) a_{ij}(g(t))$.
The ratio  $\overline{a}_{ij}(t)/\overline{a}_{ji}(t)$ is a slowed down version
of the ratio $a_{ij}(t)/a_{ji}(t)$, and is therefore slowly growing.
On the other hand, since $x(t)$ does not converge,  $z(t)$ does not converge either.

%%%%%%%%%%%%%%%%%%%%%%%%%%%%%%%%%%%%%%%%%%%%%%%%%%%%%%%%%%
%%%%%%%%%%%%%%%%%%%%%%%%%%%%%%%%%%%%%%%%%%%%%%%%%%%%%%%%%%
%%%%%%%%%%%%%%%%%%%%%%%%%%%%%%%%%%%%%%%%%%%%%%%%%%%%%%%%%%
\section{Particular cases of cut-balanced dynamics}
%%%%%%%%%%%%%%%%%%%%%%%%%%%%%%%%%%%%%%%%%%%%%%%%%%%%%%%%%%
%%%%%%%%%%%%%%%%%%%%%%%%%%%%%%%%%%%%%%%%%%%%%%%%%%%%%%%%%%

\label{sec:part_cases}

The cut-balance condition is a rather weak assumption,
but may be hard to check. The next proposition provides five special cases of
cut-balanced systems that often arise naturally. It should however be understood that the class of cut-balanced systems is not restricted to these five particular
cases.

\begin{prop}\label{cor:stronger_condition} 
A collection of nonnegative coefficients $a_{ij}(\cdot)$ that satisfies
any of the following five conditions also satisfies the cut-balance condition (Assumption \ref{as:cb}).
\begin{enumerate}
\item[(a)] Symmetry: $a_{ij}(t) = a_{ji}(t)$, for all $i,j,t$.
\item[(b)] Type-symmetry: There exists $K\geq 1$ such that\\ $K^{-1}a_{ji}(t) \leq a_{ij}(t) \leq K a_{ji}\green{(t)}$, for all $i,j,t$.
\item[(c)] Average-preserving dynamics: $\sum_{j}a_{ij}(t) =
\sum_{j}a_{ji}(t)$, for all $i,t$.
\item[(d)] Weighted average-preserving dynamics: There exist $w_i>0$ such
that $\sum_j w_i a_{ij}(t)=\sum_j w_ja_{ji}(t) $, for all \green{$i,t$.}
\item[(e)] Bounded coefficients and  \green{set-symmetry}: %\comjh{find a name} 
There \green{exist $M$ and $\alpha$ with} $M\geq \alpha >0$ such that for all $i,j,t$ either $a_{ij}(t)=0$ or $a_{ij}(t) \in [\alpha,M]$; and, for any %nonempty proper 
subset $S$ of $\{1,\ldots,n\}$,  there exist
$i\in S$ \green{and} $j\not \in S$ with $a_{ij}(t)>0$ if and only if there
exist $i'\in S$ \green{and} $j'\not \in S$ with $a_{j'i'}(t)>0$.

\end{enumerate}
\end{prop}

\begin{proof}
Condition (a) implies condition (b), with $K=1$. If condition (b) holds, then by summing over all $i$ in some set of nodes $S$ and all  $j\green{\notin S}$, we obtain the cut-balance condition.

Condition (c) implies condition (d), with
$w_i=1$. \TN{Suppose that condition (d) holds.} \green{Then,}  
\begin{align*}\sum_{i\in S,j\notin S}w_ja_{ji}(t) &=  \sum_{i\in
S}\sum_{j=1}^n w_ja_{ji}(t) - \sum_{i\in S,j\in S}w_ja_{ji}(t)\\& =
\sum_{i\in S}\sum_{j=1}^nw_ia_{ij}(t) - \sum_{i\in S,j\in
S}w_ia_{ij}(t). 
\end{align*}
It follows that $\sum_{i\in S,j{\notin S}}w_ja_{ji}(t) = \sum_{i\in
S,j{\notin S}}w_ia_{ij}(t)$, and thus that
$$
\sum_{i\in S,j\notin S}a_{ji}(t) \geq \frac{\min_iw_i}{\max_iw_i}
\sum_{i\in S,j\notin S}a_{ij}(t).
$$
A reverse inequality follows from a symmetric{al} argument.
Therefore, 
the cut-balance condition holds with $K =
{\max_iw_i}/{\min_iw_i}$, which is well defined and no less than 1 because $w_i>0$ for
all $i$.

Finally, suppose that the condition (e) is satisfied, and consider a set $S$ and a time $t$. If $a_{ij}(t)=0$ for all $i\in S$ and $j\not\in S$, then (e) implies that $a_{ji}(t)=0$ for all $i\in S$ and $j\not\in S$ so that $\sum_{i\in S,j\notin S}a_{ij}(t)= 0 =\sum_{i\in S,j\notin S}a_{ji}(t)$, and the cut-balance condition is trivially satisfied for that set $S$ and any $K$. If on the other hand there exists $i\in S,j\not\in S$ for which $a_{ij}(t)>0$, then (e) implies the existence of $i'\in S, j'\not\in S$ such that $a_{j'i'}(t)>0$, and $a_{j'i'},a_{ij}\in [\alpha,M]$. Let $\abs{S}$ be the cardinality of $S$. Then, 
$$
\abs{S}(n-\abs{S}) M \geq \sum_{i\in S,j\notin S}a_{ij}(t)\geq \alpha,$$
and% ,  \text{\qquad  and \qquad} 
$$\abs{S}(n-\abs{S}) M \geq \sum_{i\in S,j\notin S}a_{ji}(t)\geq \alpha, 
$$
so that the cut balance condition holds with $K= \max_S \frac{\abs{S}(n-\abs{S}) M}{\alpha}\leq n^2\frac{M}{4\alpha}$.
\end{proof}
Note that condition (d) remains sufficient for cut-balance if the weights $w_i$ change with time, provided that the ratio $(\max_i w_i(t)) /(\min_i w_i(t))$ remains uniformly bounded (the {same proof applies}). We also note that the connectivity condition in (e) is equivalent to requiring every weakly connected component to be strongly connected in the graph obtained by connecting $(j,i)$ if $a_{ij}(t)>0$, for every $t$.
%%%%%%%%%%%%%%%%%%%%%%%%%%%%%%%%%%%%%%%%%%%
%%%%%%%%%%%%%%%%%%%%%%%%%%%%%%%%%%%%%%%%%%%
%%%%%%%%%%%%%%%%%%%%%%%%%%%%%%%%%%%%%%%%%%%
\section{Application to systems with random interactions}
%%%%%%%%%%%%%%%%%%%%%%%%%%%%%%%%%%%%%%%%%%%
%%%%%%%%%%%%%%%%%%%%%%%%%%%%%%%%%%%%%%%%%%%
\label{sec:random}

We \green{give} a brief discussion of systems with random interactions. Consensus seeking systems where interactions are determined by a random process have been the object of \green{several recent} studies. For example,  Matei et al.\ \cite{MateiMartinsBaras:2008}   consider  the case where the matrix of coefficients $a_{ij}(t)$ follows a (finite\green{-state}) irreducible Markov process, and is always average-preserving. They prove that the system converges almost surely to consensus for all initial conditions if and only if the union of the graphs corresponding to each of the states of the Markov chain is strongly connected. This result is extended to continuous-time systems in \cite{MateiBaras:2009}. In \cite{TahbazJadbabaie:2010}, Tahbaz-Salehi and Jadbabaie consider discrete-time consensus-seeking systems where the interconnection is generated by an ergodic and stationary random process, without assuming that the average is preserved. They prove that the system converges almost surely to consensus  if and only if an \green{associated average graph} contains a directed spanning tree.

It turns out that convergence for the case of random interactions is a simple consequence of deterministic convergence results: Theorem \ref{thm:main_convergence} can be directly applied to systems where the coefficients $a_{ij}(\cdot)$ are modeled as  a random process whose sample path satisfies the cut-balance condition with probability 1 (possibly with a different constant $K$ for different sample paths, and even in the absence of global upper bound on $K$). Indeed, if this is the case, Theorem \ref{thm:main_convergence} implies that each $x_i(t)$ converges, with probability 1.
Furthermore, if  ${\rm P}(\int_0^\infty a_{ij}(t)dt =\infty) =1$, then $x_i^*=x_j^*$,  with probability~1.

%%%%%%%%%%%%%%%%%%%%%%%%%%%%%%%%%%%%%%%%%%%%%%%%%%%%%%%%%%%%%%
%%%%%%%%%%%%%%%%%%%%%%%%%%%%%%%%%%%%%%%%%%%%%%%%%%%%%%%%%%%%%%
%%%%%%%%%%%%%%%%%%%%%%%%%%%%%%%%%%%%%%%%%%%%%%%%%%%%%%%%%%%%%%
\section{Application to systems with endogenous connectivity}\label{sec:application_endogenous}
%%%%%%%%%%%%%%%%%%%%%%%%%%%%%%%%%%%%%%%%%%%%%%%%%%%%%%%%%%%%%%
%%%%%%%%%%%%%%%%%%%%%%%%%%%%%%%%%%%%%%%%%%%%%%%%%%%%%%%%%%%%%%

Theorem \ref{thm:main_convergence} dealt with the case where the coefficients
$a_{ij}(t)$ are given functions of time; in particular, $x(t)$ was generated by a \emph{linear}, albeit time-varying, differential or integral equation. We now show that Theorem \ref{thm:main_convergence} also applies  to \emph{nonlinear} systems where the coefficients (and the interaction
topology) are endogenously determined by the vector $x(t)$ of agent values.
This is possible because Theorem \ref{thm:main_convergence} allows for arbitrary variations of the coefficients $a_{ij}(t)$, thus encompassing the endogenous case.

\begin{cor}\label{cor:endogenous}
For every $i$ and $j$, we are given a nonnegative measurable function $a_{ij}:\Re^+\times \Re^n\to \Re^+$. 
Let $x:\Re^+\to \Re^n$ be a \green{measurable function that satisfies} 
the system of integral equations
\begin{equation}\label{eq:evol_x_endogenous}
x_i(t) = x_i(0) + \int_{0}^t \sum_j
a_{ij}(\tau,x(\tau))\prt{x_j(\tau) - x_i(\tau)}d\tau, 
\end{equation}
for $ i=1,\ldots,n.$ Suppose that there exists $K\geq 1$ such that for all $x$, and $t$,
and any nonempty proper subset $S$
of $\{1,\dots,n\}$, we have 
\begin{equation*}
K^{-1}\sum_{i\in S,j\notin S}  a_{ji}(t,x)\leq \sum_{i\in
S,j\notin S} a_{ij}(t,x) \leq  K\sum_{i\in S,j\notin S} a_{ji}(t,x).
\end{equation*}

\begin{enumerate}
\item[(a)]
The limit $x_i^* = \lim_{t\to \infty}x_i(t)$ exists, and $x_i^*\in
[\min_jx_j(0),\max_jx_j(0)]$.
\item[(b)]
If $i$ and $j$ belong to the same connected component of $G$, 
then $x_i^* = x_j^*$, where the unbounded interactions graph $G$ is defined for each trajectory by letting $(j,i)\in E$ if $\int_{0}^\infty
a_{ij}(t,x(t))\,dt = \infty$.
\end{enumerate}
\end{cor}

\begin{proof}
Let us fix a solution $x$ to Eq.\ \eqref{eq:evol_x_endogenous}.
For this particular function $x$, and for every $i$, $j$, we define a 
\green{(necessarily measurable)} 
function $\hat
a_{ij}:\Re^+\to \Re^+$ by letting $\hat a_{ij}(t) = a_{ij}(t,x(t))$.
By the assumptions of the corollary, the functions $\hat a_{ij}$ satisfy the cut-balance
condition (Assumption \ref{as:cb}). Furthermore, $x$ is also a solution to
the system of (linear) integral equations
$$
x_i(t) = x_i(0) +\int_{0}^t\hat
a_{ij}(\tau)\prt{x_j(\tau)-x_i(\tau)}d\tau,
$$
for $i=1,\ldots,n.$
The result follows by applying  Theorem
\ref{thm:main_convergence} to the latter system.
\end{proof}

Note that the nonlinear system of integral equations \eqref{eq:evol_x_endogenous} considered in Corollary \ref{cor:endogenous} may have zero, one, or multiple solutions. Our result does not have any implication on the problem of existence or uniqueness of a solution, but applies  to every solution, if one exists.
Naturally, Corollary \ref{cor:endogenous} also holds if the coefficients $a_{ij}(x,t)$
satisfy stronger conditions such as type-symmetry or the condition $\sum_j
w_ja_{ji}(t,x)= \sum_j w_i a_{ij}(t,x)$ for some positive coefficients  $w_i$, as in Proposition
\ref{cor:stronger_condition}. 

We note that part (c) of Theorem
\ref{thm:main_convergence} does not extend to the nonlinear case
where the coefficients $a_{ij}$ also depend on $x$. Indeed, the
proof of Corollary \ref{cor:endogenous} applies Theorem
\ref{thm:main_convergence} to an auxiliary linear system,
and the choice of this linear system is based on the actual solution $x(\cdot)$.
Part (c) of Theorem
\ref{thm:main_convergence} does apply to this particular linear system, and implies that 
$x_i^*$ is indeed different from $x_j^*$ whenever $i$ and $j$ belong to
different connected components of the associated graph $G$, unless
$x(0)$ belongs to a lower-dimensional exceptional set.
However, this exceptional set is associated with the particular linear system, which is in turn determined by $x(0)$; different $x(0)$ can be associated with different exceptional sets $D(x(0))$. So, it is in principle possible that 
every $x(0)$ in a full-dimensional set falls in the exceptional set $D(x(0))$. This is not just a theoretical possibility, as
illustrated by the four-dimensional example that follows.

\begin{exmp}\label{ex:2}
\rm
Let $n=4$. Consider a sorted initial vector, so that $x_1(0)\leq x_2(0)\leq x_3(0)\leq x_4(0)$. Suppose that the coefficients $a_{ij}$ have no explicit dependence on time, but are functions of $x$, with 
$a_{13}(x) = a_{31}(x) =1$ and 
$a_{24}(x) = a_{42}(x) = 1$ as long $x_1<x_2<x_3<x_4$.
Otherwise, $a_{13}(x) = a_{31}(x)=a_{24}(x) = a_{42}(x) = 0$. All
other coefficients are 0. 
These coefficients are
symmetric, and thus cut-balanced. 
The corresponding system has a solution of the following form: 
$x_1(t),x_2(t)$ keep increasing and $x_3(t),x_4(t)$ keep decreasing, 
until some time $t^*$ at which agents 2 and 3 hold the same value; after that
time, all values remain constant. Thus, there is a 
$4$-dimensional set of initial conditions for which the resulting limits satisfy $x_2^*=x_3^*$.
Note that
$$\int_{0}^\infty a_{ij}(t)\, dt =  \int_{0}^{t^*} a_{ij}(t)\, dt <\infty,\qquad  \mbox{for all }i,j,
$$
and
the unbounded interactions graph $G$ has no edges.
Yet, despite the fact that nodes 2 and 3 belong to different strongly connected components, $x_2^*$ and $x_3^*$ are equal on a $4$-dimensional set of initial conditions. $\square$
\end{exmp}

Finally, we note that the structure of the graph $G$ in Corollary \ref{cor:endogenous} can be hard to determine, because $G$ depends on the evolution of $x$ via the $a_{ij}(t,x(t))$, and the evolution of $x$ is a priori unknown. In particular, it may be hard to determine  whether $G$ will be connected, thus guaranteeing consensus. However, as will be illustrated in the application below, the first part of the Corollary guarantees the convergence of any system satisfying the cut-balance condition. One can then use additional information on the graph $G$ to characterize the possible limiting states $x^*$.

We now apply Corollary
\ref{cor:endogenous} to 
a nonlinear multi-agent system of a form studied in 
\cite{CanutoFagnaniTilli:2008,Lorenz:2007,
BlondelHendrickxTsitsiklis:2009_APCT,HegselmannKrause:2002,Ben-naimKrapivskyRedner:2003,DeffuantNeauAmblardWeisbuch:2000,MirtabatabaeiBullo:2011,LinBrouckeFrancis:2004} (often in the context of bounded confidence models) in which
the agent values evolve according to the integral equation
version of
\begin{equation}\label{eq:APCT}
\green{\frac{d}{dt}}x_i(t) = \sum_{j:\, |x_i(t)-x_j(t)|<1}\prt{x_j(t)-x_i(t)}.
\end{equation}
The evolution of the interaction topology for this system is a
priori unknown, because it depends on the a priori unknown evolution of $x$. In addition, the interaction topology can, in principle, 
change an infinite number of times during a finite time interval. Determining whether such a system converges can be complicated. And indeed, the convergence of an asymmetric counterpart of (\ref{eq:APCT}) remains open \cite{MirtabatabaeiBullo:2011}.
Observe now that (\ref{eq:APCT}) is of the form 
(\ref{eq:evol_x_endogenous}), with $a_{ij}(x) = 1$ if
$\abs{x_i-x_j}<1$, and  $a_{ij}(x) =0$ otherwise. 
The coefficients $a_{ij}$ are symmetric and therefore satisfy 
the cut-balance condition in Corollary \ref{cor:endogenous}. 
Part (a) of the corollary implies that
the limit $x_i^*=\lim_{t\to \infty} x_i(t)$
exists for every $i$. 
Suppose now that for
some $i,j$, we have $\abs{x_i^*-x_j^*}<1$. Then, there exists a time after which $|x_i(t) -x_j(t)|
<1$ and therefore $a_{ij}(x(t))=1$. As
a consequence, $\int_{0}^\infty a_{ij}(x(t))\, dt
=\infty$, and Corollary \ref{cor:endogenous}(c) implies that
$x_i^*=x_j^*$. This proves that the system converges, and
that the limiting values of any two agents are either equal or
separated by at least 1, a result which had been obtained by ad hoc arguments in \cite{Hendrickx:2008phdthesis}. 

Exactly the same argument can be made for
a system that evolves according to the integral equation version of
\begin{equation*}
\green{\frac{d}{dt}} x_i(t) =
\frac{\sum_{j:\,|x_i(t)-x_j(t)|<r}\prt{x_j(t)-x_i(t)}}{\sum_{j:\, |x_i(t)-x_j(t)|<r}1}.
\end{equation*}
(We let $\dot x_i(t)=0$ whenever the denominator on the right-hand side is zero.)
This system satisfies a type-symmetry condition with $K=N$.
A variant of \green{such a system, with a different interaction radius $r_i$ for each $i$,} has been studied in \cite{LinBrouckeFrancis:2004} under the assumption that the graph of interactions is strongly connected at every $t$.

\green{A further} variation of (\ref{eq:APCT}) is of the form 
\begin{equation}\label{eq:APCTwithF}
\green{\frac{d}{dt}}  x_i(t) = \sum_{j} f(x_j(t)-x_i(t))\prt{x_j(t)-x_i(t)},
\end{equation}
where $f$ is \green{an even nonnegative} function. A
multidimensional version of (\ref{eq:APCTwithF}), 
\green{where each $x_i$ is a vector,} is studied in
\cite{CanutoFagnaniTilli:2008},
for the special case of a radially decreasing function $f$ that becomes zero beyond a
certain threshold. (The results in \cite{CanutoFagnaniTilli:2008} also allow for a continuum of agents, which arise for example when 
studying discrete-agent models in the limit of a large number of agents).
The system (\ref{eq:APCTwithF}) is of the form
\eqref{eq:evol_x_endogenous}, 
with $a_{ij}(x) = f(x_i-x_j)$. It satisfies a
type-symmetry condition, with $K=1$, and Corollary
\ref{cor:endogenous} implies convergence. Moreover, if
$f$ is bounded and is continuous except on a finite set, then for
any $i,j$, either $x_i^*=x_j^*$, or $x_i^*-x_j^*$ belongs to the
closure of the set $\{z:f(z)=0\}$ of roots of $f$. To see this, 
Corollary \ref{cor:endogenous} asserts that if
$x_i^*\neq x_j^*$, then $\int_{0}^\infty f(x_i(t)-x_j(t))\, dt
<\infty$, which implies that $x_i(t)-x_j(t)$ cannot stay forever
in a set on which $f$ admits a
positive lower bound.

\section{Discrete-time systems}\label{sec:discrete-time}

Much of the literature on consensus-seeking processes is focused on
discrete-time systems. Typical results guarantee convergence to consensus under
the assumption that the system is \quotes{sufficiently connected}
on any time interval of a certain length \cite{Tsitsiklis:84phdthesis,Moreau:2005,JadbabaieLinMorse:2003}  
and \green{sometimes} provide bounds
on the convergence rate. When interactions are type-symmetric,
convergence to consensus is guaranteed under the weaker assumption
that the system remains \quotes{sufficiently connected} after any finite
time \cite{BlondelHendrickxOlshevskyTsitsiklis:2005,Moreau:2005,LiWang:2004} and results 2.5.9 and 2.6.2 in \cite{Lorenz:2003diplomathesis}.
One can then easily deduce that type-symmetric systems
always converge to a limit, at which we have consensus within each of possibly many agent clusters  
\cite{Hendrickx:2008phdthesis,Lorenz:2005,HendrickxBlondel:2006_MTNS}.

In this section, we show that the convergence proof in 
\cite{BlondelHendrickxOlshevskyTsitsiklis:2005,Hendrickx:2008phdthesis}
can be extended easily to prove that cut-balance is also a sufficient condition for convergence in the discrete-time case,
as in Theorem \ref{thm:main_convergence}. A special 
case of this result asserts the convergence of systems
that preserve some weighted average of the states, and \green{thus includes  
a sample path} version of recent results of
\cite{TouriNedic:2010} on stochastic consensus-seeking systems.

Discrete-time systems are in some sense simpler because of
the absence of Zeno behaviors or unbounded sets of
finite measure. However, discrete-time systems allow for large instantaneous
variations of the agents' values. In particular, an agent could
entirely \quotes{forget} its value at time $t$ when computing its
value at time $t+1$, leading to instabilities where agents keep switching
their values. For this reason, we introduce two additional
assumptions: each agent is influenced by its own value
when computing its new value, and every positive coefficient must
be larger than some fixed positive lower bound.

\begin{thm}\label{thm:DTcutbalance}
Let $x:\mathbb{N}\to \Re^n$ satisfy
$$
x_i(t+1) = \sum_{j=1}^na_{ij}(t)x_j(t),\qquad i=1,\ldots,n,
$$
where $a_{ij}(t)\geq 0$ for all $i$, $j$, and $t$, and $\sum_{j=1}^n a_{ij}(t) =1$
for all $i$ and $t$. Assume that the following three  conditions hold.
\begin{itemize}
\item[(a)]
Lower bound on positive coefficients: there  exists some $\alpha >0$
such that if $a_{ij}(t)>0$, then
$a_{ij}(t)\geq\alpha$, for all $i$, $j$, and $t$.
\item[(b)]
Positive diagonal coefficients: we have $a_{ii}(t) \geq \alpha$, for all
$i$ and $t$.
\item[(c)] Cut-balance: for any nonempty proper subset $S$ of $\{1,\ldots,n\}$,  there exist
$i\in S$ and $j\not \in S$ with $a_{ij}(t)>0$ if and only if there
exist $i'\in S$ and $j'\not \in S$ with $a_{j'i'}(t)>0$.
\end{itemize}
Then, the limit $x_i^* = \lim_{t\to \infty}x_i(t)$ exists, and $x_i^*\in
[\min_jx_j(0),\max_jx_j(0)]$. Furthermore, consider the directed graph
$G=(\{1,\dots,n\},E)$ in which $(j,i)\in E$ if $a_{ij}(t)>0$ infinitely
often. Then, every weakly connected component of $G$ is strongly
connected, and if $i$ and $j$ belong to the same connected
component of $G$, then $x_i^* = x_j^*$.
\end{thm}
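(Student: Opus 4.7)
The plan is to adapt the proof of Theorem~\ref{thm:main_convergence} to the discrete-time setting. The first step is to observe that assumptions~(a) and~(c) together yield a quantitative cut-balance condition analogous to Assumption~\ref{as:cb}: since each positive coefficient lies in $[\alpha,1]$ and since~(c) ensures that a cross-cut coefficient exists in one direction if and only if it does in the reverse direction, we obtain a constant $K_0$ depending only on $n$ and $\alpha$ with
$$
K_0^{-1}\sum_{i\in S,j\notin S} a_{ji}(t)\,\leq\,\sum_{i\in S,j\notin S} a_{ij}(t)\,\leq\, K_0\sum_{i\in S,j\notin S} a_{ji}(t)
$$
for every $t$ and every nonempty proper $S$. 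As in Theorem~\ref{thm:main_convergence}, I would then sort the components, setting $y_i(t)=x_{p_t(i)}(t)$ with $p_t$ the sorting permutation, and define the weighted partial sums $S_m(t)=\sum_{i=1}^m K^{-i}y_i(t)$ for a constant $K\geq K_0$ to be chosen below.

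The main step is to show that $S_m(t+1)\geq S_m(t)$ for every $m$ and $t$. Writing $b_{ij}(t)=a_{p_{t+1}(i),p_{t+1}(j)}(t)$ and $\tilde y_j(t)=x_{p_{t+1}(j)}(t)$, we have $y_i(t+1)=\sum_j b_{ij}(t)\tilde y_j(t)$, and the matrix $b(t)$ is row-stochastic and inherits the cut-balance with the same constant $K_0$. The obstacle, absent in continuous time, is that $\tilde y(t)$ is a permutation of $y(t)$ that is in general \emph{not} sorted (the sort order at time $t+1$ can differ from that at time $t$), so Lemma~\ref{lem:deriv_weighted_avg_cutbal} does not apply verbatim. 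The idea is to choose $K$ sufficiently large compared with $1/\alpha$ (with at most a polynomial dependence on $n$) so that the geometric decay of the weights $K^{-i}$ absorbs the reordering loss; the lower bound $\alpha$ on positive entries of $A(t)$ together with the positive diagonal (assumption~(b)) prevent $y(t+1)$ from being too far from the sorted version of $\tilde y(t)$ in either value or order. A two-agent analysis, separating the sort-preserving and the sort-flipping case, already shows that $K$ of order $1/\alpha$ suffices; the higher-dimensional case amounts to controlling chains of such flips. This reshuffling-versus-weight-decay balance is the main obstacle of the proof. Once $S_m$ is shown to be nondecreasing (the bounds $y_1(0)\leq y_i(t)\leq y_n(0)$ being immediate from the stochasticity of $A(t)$), induction on $m$ gives convergence of every $y_i(t)$ and hence of every $x_i(t)$ to some limit in $[\min_j x_j(0),\max_j x_j(0)]$.

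The remaining claims follow by adapting the arguments in Theorem~\ref{thm:main_convergence}(b) and~(c). If a weakly connected component $W$ of $G$ were not strongly connected, let $C_1$ be a source in its condensation. Weak connectivity of $W$ forces some edge of $G$ to leave $C_1$, so there exist $j\in C_1$, $i\notin C_1$ with $a_{ij}(t)>0$ for infinitely many $t$; at each such $t$ the topological cut-balance~(c) applied to $S=C_1$ provides $i'\in C_1$, $j'\notin C_1$ with $a_{i'j'}(t)>0$, and a pigeonhole over the finite set of such pairs produces a fixed $(i^*,j^*)$ for which $a_{i^* j^*}(t)>0$ infinitely often, i.e., an edge of $G$ entering $C_1$, which contradicts the source property. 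For the same-limit claim, if $(j,i)\in E$, i.e., $a_{ij}(t)>0$ infinitely often, then at each such $t$ the update $x_i(t+1)=\sum_k a_{ik}(t)x_k(t)$ is a convex combination placing weight at least $\alpha$ on $x_j(t)$ and, by~(b), weight at least $\alpha$ on $x_i(t)$. Combined with the convergence of each $x_k$ established above, this forces $\lim_t |x_i(t)-x_j(t)|=0$, whence $x_i^*=x_j^*$, and transitivity along a path in $G$ extends the equality to any two agents in the same connected component of $G$.
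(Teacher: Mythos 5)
There is a genuine gap at the heart of your convergence argument. You correctly identify that in discrete time the sorted vector satisfies $y(t+1)=\mathrm{sort}\bigl(b(t)\,\tilde y(t)\bigr)$ with $\tilde y(t)$ an \emph{unsorted} permutation of $y(t)$, so that Lemma \ref{lem:deriv_weighted_avg_cutbal} does not apply --- and then you simply assert that taking $K$ large compared with $1/\alpha$ ``absorbs the reordering loss.'' That assertion is exactly the statement that would need to be proved, and you offer no argument for it beyond a two-agent case and the phrase ``controlling chains of such flips.'' Note also that sorting works against you here: since the weights $K^{-1}>K^{-2}>\dots$ are decreasing, the rearrangement inequality shows that re-sorting $b(t)\tilde y(t)$ can only \emph{decrease} $S_m$, so monotonicity of $S_m$ is not a routine perturbation of the continuous-time computation. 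The paper does not attempt this route at all. It exploits the genuinely discrete simplification that an edge which is not in $G$ (i.e., $a_{ij}(t)>0$ only finitely often) vanishes entirely after some finite time $t^*$, so each connected component $C$ of $G$ becomes completely decoupled from the rest. Within such an isolated component, $\max_{i\in C}x_i$ and $\min_{i\in C}x_i$ are monotone, and conditions (a)--(c) yield a span contraction: normalizing the span to $[0,1]$ and tracking the sets $C_k(t)=\{i\in C: x_i(t)\geq\alpha^k\}$, one shows via cut-balance and the positive diagonal that $C_k$ never shrinks and must eventually grow, until $\min_{i\in C}x_i\geq\alpha^{|C|-1}$; hence the span contracts by the factor $1-\alpha^{|C|-1}$ infinitely often and all agents in $C$ converge to a common limit. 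This single argument delivers both convergence and within-component consensus, with no need for $S_m$ or for your separate final step.

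Two secondary points. First, your derivation of $x_i^*=x_j^*$ from $(j,i)\in E$ is also too quick: at a time when $a_{ij}(t)\geq\alpha$, the update is a convex combination of \emph{all} components, and the mass $1-a_{ij}(t)-a_{ii}(t)$ may sit on agents whose values lie well below $x_i(t)$, so ``weight $\alpha$ on $x_j$ and $\alpha$ on $x_i$'' plus convergence of each $x_k$ does not by itself force $|x_i(t)-x_j(t)|\to 0$; one needs something like an induction on the smallest limit value (or the paper's component-isolation argument, which sidesteps this entirely). Second, the parts of your proposal that are sound --- the quantitative cut-balance constant $K_0=O(n/\alpha)$ obtained from (a) and (c), and the pigeonhole argument showing that weakly connected components of $G$ are strongly connected --- match the paper, and the latter is indeed how the paper handles that claim.
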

\begin{proof}
The fact that every weakly connected component of $G$ is strongly
connected is proved exactly as in Theorem
\ref{thm:main_convergence}. 

Consider such a connected component $C$. It
follows from the definition of $G$ that there exists a time
$t^*$ after which $a_{ij}(t)=a_{ji}(t)=0$ for any $i\in C$ and $j\not \in C$. Thus, 
the values $x_i(t)$ with $i\in C$
do not influence and are not influenced by the remaining values
after time $t^*$. In particular, if $t^*\leq t' < t$, then $\min_{j\in
C} x_j(t')\leq x_i(t) \leq \max_{j\in C}x_j(t')$ holds for all
$i\in C$; furthermore, $\max_{i\in C} x_i(t)$ and $\min_{i\in C}x_i(t)$
are monotonically nonincreasing and nondecreasing, respectively.

We now show that \green{there exists a constant $\gamma>0$ such that} for any $t'\geq t^*$, there exists a $t''>t$
\green{for which} $\max_{i\in C}x_i(t'') -\min_{i\in C} x_i(t'')\leq \gamma
\big(\max_{i\in C}x_i(t') -\min_{\green{i\in C}} x_i(t')\big)$. 
\green{We assume that $|C|\geq 2$, because otherwise the claim is trivially true.}

Suppose that  $\max_{i\in C}x_i(t')=1$ and $\min_{i\in C}x_i(t')
=0$. This is not a loss of generality; the argument can be
carried out for any other values by appropriate scaling and
translation. For any $t$, let $C_k(t)$ be the set of indices
$i\in C$ \green{for which} $x_i(t) \geq \alpha^k$. Clearly, $C_0(t')$ is
nonempty. Consider some $t$ and $k$ such that $\emptyset\neq
C_k(t) \neq C$. We distinguish two cases. 

(i) Suppose that $a_{ij}(t)= 0$ for all  $i\in C_k(t)$ and $j\in
C\setminus C_{k}(t)$. Then, for any $i\in C_k(t)$, we have 
\begin{align*}
x_i(t+1) &= \sum_{j=1}^na_{ij}(t) x_j(t) = \sum_{j\in
C_k(t)}a_{ij}(t) x_j(t)\\ & \geq \sum_{j\in C_k(t)}a_{ij}(t) \alpha^k
\geq \alpha^k.
\end{align*}
(We have used here the facts that
$a_{ij}(t)=0$ for every $j\not\in C_k(t)$, and
$\sum_{j}a_{ij}(t) =1$.) Therefore, $i$ belongs to
$C_k(t+1)$ as well. So, in this case we have
$C_k(t) \subseteq C_{k}(t+1)$.

(ii)
Suppose now that 
$a_{ij}(t) >0$ for some $i\in C_k(t)$ and $j\in C\setminus C_{k}(t)$. 
Then the cut-balance condition, together with $t\geq t^*$,
implies that $a_{i'j'}>0$ for at least one $i'\in C\setminus
C_k(t)$ and $j'\in C_k(t)$. For this $i'$, we have
\begin{align*}
x_{i'}(t+1) &= \sum_{j=1}^na_{i'j}(t) x_j(t)  = \sum_{j\in
C}a_{i'j}(t) x_j(t)\\ &\geq a_{i'j'}(t) x_{j'}(t) \geq \alpha \cdot \alpha^k
= \alpha^{k+1}
\end{align*}
where we have used the fact that $x_{\jnt{j}}(t)\geq  \min _{i\in
C}x_i(t')\geq 0$, for all $\jnt{j}\in C$ and $t\geq t'$. Therefore, $i'\in
C_{k+1}(t+1)$. Moreover, for any $i\in C_{k}(t)$, we have 
$x_i(t)=\sum_{j\in C}a_{ij}(t)x_j(t) \geq a_{ii}(t)x_i(t) \geq
\alpha \cdot \alpha^k = \alpha^{k+1}$, because $a_{ii}(t)\geq \alpha$
for all $i$ and $t$. Thus, if $a_{ij}(t)>0$ for some $i\in
C_{k}(t)$ and $j\in C\setminus C_{k}(t)$, then the set $C_{k+1}(t+1)$ contains $C_k(t)$ and
at least one additional node.

Recall now that $C_0(t')$ is
nonempty. Moreover, the definition of $C$ as a strongly connected
component of $G$ implies that for any $t$ and any nonempty set
$S\subset C$, there exists a $\hat t$ and some $i\in S$, $j\in S\setminus C$, such that $a_{ij}(\hat t)>0$. \green{Then, a straightforward inductive argument based on the above two cases} 
shows the existence of a time $t''>t'$ at which
$C_{\abs{C}-1}(t'') =C$, i.e., a time $t''$ at which $\min_{i\in
C}x_i(t'')\geq \alpha^{\abs{C}-1}$. Since $x_i(t)$ remains less 
than or equal to 1 for $i\in C$ and $t>t'$, we conclude that $\max_{i\in C}x_i(t'')- \min_{i\in C}x_i(t'')$ is bounded by
$$
(1-\alpha^{\abs{C}-1})\prt{\max_{i\in C}x_i(t')- \min_{i\in
C}x_i(t')}.
$$
This inequality, 
together with the fact that $\max_{i\in C}x_i(t)$ and
$\min_{i\in C}x_i(t)$ are respectively nonincreasing and
nondecreasing after time $t^*$, implies the convergence of
$x_i(t)$, for all $i\in C$, to a common limit.
\end{proof}

Observe that part (c) of Theorem \ref{thm:main_convergence},
convergence to generically different values for the different
components of $G$, has no counterpart for the discrete-time
case. Indeed, if $a_{ij}(1) = 1/n$ for all $i,j$, the system
reaches global consensus after one time step, irrespective of the connectivity properties of $G$. 

Condition (c) in Theorem \ref{thm:DTcutbalance} has a graph-theoretic  interpretation. For every $t$, let $G_t$ be the graph on $n$ nodes obtained by connecting $j$ to $i$ if $a_{ij}(t)$ is positive. Condition (c) is satisfied if and only if for every $t$, every weakly connected component of $G_t$ is strongly connected.

Finally, note that convergence results for discrete-time consensus seeking systems with random or endogenously determined interactions can be derived from  Theorem \ref{thm:DTcutbalance} in a straightforward manner, exactly as in Section \ref{sec:random} and Corollary \ref{cor:endogenous}, respectively.

%%%%%%%%%%%%%%%%%%%%%%%%%%%%%%%%%%%%%%%%%%%%%%%%%%%%%%%%%%%%%%
%%%%%%%%%%%%%%%%%%%%%%%%%%%%%%%%%%%%%%%%%%%%%%%%%%%%%%%%%%%%%%
%%%%%%%%%%%%%%%%%%%%%%%%%%%%%%%%%%%%%%%%%%%%%%%%%%%%%%%%%%%%%%
\section{Concluding remarks}\label{sec:ccl}
%%%%%%%%%%%%%%%%%%%%%%%%%%%%%%%%%%%%%%%%%%%%%%%%%%%%%%%%%%%%%%
%%%%%%%%%%%%%%%%%%%%%%%%%%%%%%%%%%%%%%%%%%%%%%%%%%%%%%%%%%%%%%

In this paper, we introduced a cut-balance condition, which is a natural and perhaps the broadest possible symmetry-like assumption 
for consensus seeking systems.
This assumption is satisfied, in particular, if the dynamics preserve a weighted average, or if no 
agent can influence another without incurring a proportional  reverse influence. 
We proved that the cut-balance assumption is a sufficient condition for the
convergence of continuous-time consensus seeking
systems, and provided a characterization of the resulting local consensus,
in terms of the evolution of the interaction
coefficients. We then applied our results to
systems with endogenously determined connectivity. Related results were also obtained for the discrete time case. 
We also showed that our result fails to hold if the proportionality constant $K$ in the cut-balance assumption can grow with time, without bound.

We end by discussing the possibility of extending our result to models involving a continuum of agents. Such models  appear naturally when studying discrete-agent models, in the limit of a large number of agents\cite{BlondelHendrickxTsitsiklis:2009_APCT,
CanutoFagnaniTilli:2008,Lorenz:2007,Ben-naimKrapivskyRedner:2003,FortunatoLatoraPluchinoRapisarda:2005}. 
\green{Let} $x_t(\alpha)$ be the value of agent $\alpha\in [0,1]$ at time
$t$. Consider then a function $x:[0,1]\times \Re^+\to
\Re:(\alpha,t)\to x_\alpha(t)$ that satisfies 
\begin{equation}\label{eq:continuum}
\frac{d}{dt}x_\alpha(t) = \int
a_{\alpha,\beta}\prt{x_\beta(t)-x_\alpha(t)}d\beta,
\end{equation}
for some measurable nonnegative function 
$a:[0,1]\times[0,1]\times\Re^+\to \Re^+:(\alpha,\beta,t)\to
a_{\alpha,\beta}(t)$. The 
extent to which our discrete-agent convergence result can be generalized to
the model (\ref{eq:continuum}) is an open problem, and the same is true in the discrete time-case. 

In addition to some technical difficulties inherent to treating a continuum of agents, the main reason \green{that} our approach does not apply directly to such systems is the following.  
Our proofs of Theorem \ref{thm:main_convergence}  and Lemma \ref{lem:deriv_weighted_avg_cutbal} rely \green{on a collection of functions $S_m$ of} the vector of agent states, \green{with} the property that $\frac{\partial S_m}{\partial x_i}\geq K\frac{\partial S_m}{\partial x_j} \geq 0$ if $j>i$, where $K\geq 1$ is the constant  from the cut-balance assumption. Finding nontrivial  functions having that property seems impossible in the case of a continuum of agents when $K>1$. Suppose indeed that we have  a function $S$ \green{of the configuration of agent values} such that 
\begin{equation}\label{eq:condition}
\frac{\partial S}{\partial x_\alpha}\geq K\frac{\partial S}{\partial x_\beta} \geq 0
\end{equation}
when $\beta > \alpha$. For  two agents $\alpha < \beta$, take now a sequence $\alpha < \gamma_1<\gamma_2< \dots< \gamma_n < \beta$. By a repeated application of (\ref{eq:condition})
we obtain $\frac{\partial S_m}{\partial x_\alpha}\geq K^{n+1}\frac{\partial S_m}{\partial x_\beta}$. Since this is true for arbitrary $n$, the derivative of $S$ with respect to the agent state would either be 0 everywhere or unbounded almost everywhere. \green{Whether an alternative approach can be used to establish unconditional convergence is an open problem.}

Let us also note that some 
convergence results for the continuum model are available for some special cases with $K=1$, for example because of symmetric interactions, i.e., $a_t(\alpha,\beta)=a_t(\beta,\alpha)$; see \cite{BlondelHendrickxTsitsiklis:2009_APCT,
CanutoFagnaniTilli:2008}).

\appendix
%%%%%%%%%%%%%%%%%%%%%%%%%%%%%%%%%%%%%%%%%%%%%%%%%%%%%%%%%%%%%%%%%%%%%%%%
%%%%%%%%%%%%%%%%%%%%%%%%%%%%%%%%%%%%%%%%%%%%%%%%%%%%%%%%%%%%%%%%%%%%%%%%
%%%%%%%%%%%%%%%%%%%%%%%%%%%%%%%%%%%%%%%%%%%%%%%%%%%%%%%%%%%%%%%%%%%%%%%%
\section{Evolution of the sorted vector}\label{appen:evol_sorted}
%%%%%%%%%%%%%%%%%%%%%%%%%%%%%%%%%%%%%%%%%%%%%%%%%%%%%%%%%%%%%%%%%%%%%%%%
%%%%%%%%%%%%%%%%%%%%%%%%%%%%%%%%%%%%%%%%%%%%%%%%%%%%%%%%%%%%%%%%%%%%%%%%

In this appendix we prove that if we sort the components of the vector $x$, 
the resulting vector satisfies essentially the same evolution equation as the original system, even though the required permutation can change with time as the relative order of different components changes. The result appears  elementary (if not obvious), yet we are not aware of a simple proof.

\begin{prop}\label{prop:sort}
Let $x:\Re^+ \to \Re^n$ be a continuous function
that satisfies 
\begin{equation}\label{eq:appen_evolx}
x_i(t) = x_i(0) + \int_{0}^t v_i(\tau)\, d\tau,
\qquad i=1,\ldots,n.
\end{equation}
For any $t\geq 0$, let 
$p(t)$ be the permutation of
$\{1,\dots,n\}$ defined by the following lexicographic rule: 
if $i<j$, then, either (i) $x_{p_i(t)}(t)< x_{p_j(t)}(t)$, or (ii) 
$x_{p_i(t)}(t)= x_{p_j(t)}(t)$ and $p_i(t) < p_j(t)$. Then, 
\begin{equation}\label{eq:evol_sorted}
x_{p_i(t)}(t) = x_{p_i(0)}(0) + \int_{0}^t
v_{p_i(\tau)}(\tau)\, d\tau.
\end{equation}
\end{prop}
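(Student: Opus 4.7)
The plan is to reduce the claim about the individual order statistics to one about partial sums, which are easier to handle because they coincide with the minimum of a fixed finite collection of absolutely continuous functions. Define
$$S_k(t) = \sum_{i=1}^{k} x_{p_i(t)}(t) = \min_{I\subset \{1,\dots,n\},\, |I|=k}\, \sum_{i\in I} x_i(t),\qquad k=1,\ldots,n,$$
i.e., the sum of the $k$ smallest components of $x(t)$. Since each $x_i$ is absolutely continuous by \eqref{eq:appen_evolx}, and the minimum of finitely many absolutely continuous functions is itself absolutely continuous (use $\min(f,g)=\tfrac{1}{2}(f+g-|f-g|)$), $S_k$ is absolutely continuous; hence it is differentiable at almost every $\tau$ and equals the integral of its derivative.

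The main step is to show that at a.e.\ $\tau$,
$$\dot S_k(\tau) = \sum_{i=1}^{k} v_{p_i(\tau)}(\tau).$$
Pick any $\tau$ at which all $x_i$ are differentiable (so $\dot x_i(\tau)=v_i(\tau)$) and at which $S_k$ is differentiable, and let $P(\tau) = \{p_1(\tau),\ldots,p_k(\tau)\}$ be the subset attaining the minimum. For $h>0$ we have $S_k(\tau+h)\leq \sum_{i\in P(\tau)} x_i(\tau+h)$ while $S_k(\tau) = \sum_{i\in P(\tau)} x_i(\tau)$; dividing by $h$ and letting $h\to 0^{+}$ yields $\dot S_k(\tau)\leq \sum_{i\in P(\tau)} v_i(\tau)$. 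The same inequality with $h<0$, after dividing by $h$ (which flips the sign), gives the opposite bound, so equality holds. Integrating then produces
$$S_k(t) = S_k(0) + \int_0^t \sum_{i=1}^{k} v_{p_i(\tau)}(\tau)\,d\tau,$$
and subtracting the identities for $S_i$ and $S_{i-1}$ (with the convention $S_0\equiv 0$) gives \eqref{eq:evol_sorted} for each individual $y_i(t)=x_{p_i(t)}(t)$.

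The main subtlety I anticipate is handling ties: when several components of $x(t)$ coincide, the minimum in the definition of $S_k$ is attained by multiple subsets and the map $t\mapsto p(t)$ can jump. The squeeze argument above sidesteps this, because it only uses that $P(\tau)$ is some minimizer, not the unique one; at a.e.\ $\tau$, differentiability of $S_k$ forces any other minimizing $I$ to yield the same value $\sum_{i\in I}v_i(\tau)=\dot S_k(\tau)$, so the particular choice dictated by the lexicographic tie-break is immaterial. A minor technical point is measurability of $\tau\mapsto v_{p_i(\tau)}(\tau)$, which follows by writing it as $\sum_{j=1}^n \mathbf{1}_{\{p_i(\tau)=j\}}\, v_j(\tau)$ and noting that the indicator is a Borel function of the continuous map $\tau\mapsto x(\tau)$.
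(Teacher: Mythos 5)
Your proof is correct, and it takes a genuinely different route from the paper's. The paper proceeds by induction on $n$: it first handles $n=2$ by decomposing the set where the order is flipped into a countable union of open intervals whose endpoints are order-change times (where the two components coincide, so the integrals of $v_1$ and $v_2$ over each such interval agree), then builds $n=3$ from two applications of the $n=2$ case, and finally inserts the $n$th component into the sorted first $n-1$ components for the induction step. Your argument instead observes that the partial sum $S_k(t)=\sum_{i=1}^k x_{p_i(t)}(t)$ is the pointwise minimum of the finitely many absolutely continuous functions $\sum_{i\in I}x_i(t)$ over $k$-subsets $I$, hence absolutely continuous; the two-sided difference-quotient squeeze at an a.e.\ point of joint differentiability (a Danskin-type envelope argument) identifies $\dot S_k(\tau)$ with $\sum_{i\in P(\tau)}v_i(\tau)$ for \emph{any} minimizing subset, in particular the lexicographic one, and telescoping $S_k-S_{k-1}$ recovers the individual order statistics. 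All the steps check out: the minimum of finitely many AC functions is AC via $\min(f,g)=\tfrac12(f+g-|f-g|)$, the fundamental theorem for AC functions applies, and your measurability remark for $\tau\mapsto v_{p_i(\tau)}(\tau)$ is sound since $\{p_i(\tau)=j\}$ is a Boolean combination of Borel conditions on the continuous map $\tau\mapsto x(\tau)$. What your approach buys is brevity and uniformity: ties and infinitely many order changes in finite time are absorbed automatically by the a.e.\ argument, with no induction and no interval decomposition. The paper's approach buys a more elementary, self-contained treatment that works directly at the level of the integral identity and makes the structure of the order-change set explicit, at the cost of a longer multi-stage induction. Given that the authors themselves remark they are not aware of a simple proof, your envelope argument is a genuine simplification worth noting.
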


Note that $x_{p_{i}(t)}(t)$ is a sorted version of
$x(t)$, with lexicographic tie breaking. Observe that when all $v_i(t)$ are smooth and the order of
$x$ changes only a finite number of times within a bounded interval, 
the result follows
immediately from the continuity of $x_{p_i(t)}(t)$ and the fact
that $\frac{d}{dt} x_{p_i(t)}(t) = v_{p_i(t)}(t)$ on every interval on
which $p_i(t)$ is constant. The proof that we present here is more
general, and only assumes measurability. In particular, it allows
for infinitely many discontinuities or order changes in  finite
time, something that cannot be ruled out, in general.

Our proof uses induction on $n$, starting with the particular
cases where $n=2$ or $n=3$.

\begin{lem}\label{lem:sort2}
Proposition \ref{prop:sort} holds when $n=2$.
\end{lem}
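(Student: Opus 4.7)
The plan is to reduce the $n=2$ case to the classical fact that the absolute value of an absolutely continuous function is itself absolutely continuous, with an explicitly computable derivative. For $n=2$, the permutation $p(t)$ takes only two values: the identity when $x_1(t) \leq x_2(t)$ (lexicographic tie-breaking forces this at equality), and the transposition when $x_1(t) > x_2(t)$. Thus $x_{p_1(t)}(t) = \min(x_1(t),x_2(t))$ and $x_{p_2(t)}(t) = \max(x_1(t),x_2(t))$, without any lexicographic ambiguity for the values themselves.

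First I would use the identity
$$
x_{p_1(t)}(t) \ = \ \frac{x_1(t)+x_2(t) - |d(t)|}{2}, \qquad x_{p_2(t)}(t) \ = \ \frac{x_1(t)+x_2(t) + |d(t)|}{2},
$$
where $d(t) := x_2(t)-x_1(t)$. Since each $x_i$ is absolutely continuous by \eqref{eq:appen_evolx}, so is $d$, and hence so is $|d|$; in particular $x_{p_i(t)}(t)$ is absolutely continuous in $t$, which already gives the integral representation \eqref{eq:evol_sorted} once the correct a.e.\ derivative is identified.

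Next I would invoke the standard measure-theoretic fact that if $d$ is absolutely continuous then $|d|'(t) = \mathrm{sgn}(d(t))\, d'(t)$ for almost every $t$, using the convention $\mathrm{sgn}(0)=0$ and the classical observation that $d'(t)=0$ almost everywhere on $\{t : d(t)=0\}$. Combined with $d'(t) = v_2(t)-v_1(t)$ a.e., this gives $|d|'(t)=\mathrm{sgn}(d(t))(v_2(t)-v_1(t))$ a.e., so a case split on the sign of $d(t)$ yields
$$
\frac{d}{dt}\,x_{p_1(t)}(t) \ = \ \begin{cases} v_1(t), & d(t)>0,\\ v_2(t), & d(t)<0, \end{cases}
$$
for almost every $t$ in the respective set; symmetrically for $x_{p_2(t)}(t)$. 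These agree with $v_{p_1(t)}(t)$ and $v_{p_2(t)}(t)$.

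The one subtlety, which I expect to be the main (if minor) obstacle, is the set $Z := \{t : d(t)=0\}$, where the lexicographic rule forces $p_1(t)=1$ so that $v_{p_1(t)}(t)=v_1(t)$, and similarly $v_{p_2(t)}(t)=v_2(t)$. But since $d'(t)=0$ a.e.\ on $Z$ and $d'=v_2-v_1$ a.e., we have $v_1(t)=v_2(t)$ for almost every $t\in Z$, so the a.e.\ derivative of $x_{p_i(t)}(t)$ on $Z$ also equals $v_{p_i(t)}(t)$. Integrating from $0$ to $t$ then yields \eqref{eq:evol_sorted}, completing the proof for $n=2$.
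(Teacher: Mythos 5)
Your proof is correct, but it takes a genuinely different route from the paper's. You write $x_{p_1(t)}(t)=\min\{x_1(t),x_2(t)\}=\tfrac12\bigl(x_1(t)+x_2(t)-|d(t)|\bigr)$ with $d=x_2-x_1$ and then rely on three standard facts about locally absolutely continuous functions: that $|d|$ is again absolutely continuous, that its a.e.\ derivative is $\mathrm{sgn}(d(t))\,d'(t)$, and that $d'=0$ a.e.\ on the level set $\{d=0\}$ --- the last fact being exactly what makes the lexicographic tie-breaking immaterial, since it forces $v_1=v_2$ a.e.\ there (indeed your argument shows the conclusion holds for any measurable tie-breaking rule). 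The paper avoids the Lebesgue differentiation machinery entirely: it splits $[0,t]$ into the set $T_1$ where $x_1\leq x_2$ and its complement $T_2$, observes that $T_2$ is an at most countable union of open intervals $(a_k,b_k)$ at whose endpoints $x_1$ and $x_2$ coincide, so that $\int_{a_k}^{b_k}v_1\,d\tau=\int_{a_k}^{b_k}v_2\,d\tau$ on each such interval and the swap of integrands costs nothing, and then handles the case $p_1(t)=2$ separately via the last crossing time $t^*$. Your argument is shorter and cleaner once one grants a.e.\ differentiability of absolutely continuous functions and the vanishing of the derivative on level sets; the paper's argument buys self-containedness, using only the continuity of $x$, the structure of open subsets of the real line, and countable additivity of the integral. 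One point worth making explicit in your write-up: the hypothesis \eqref{eq:appen_evolx} implicitly requires $v_i\in L^1_{\mathrm{loc}}$, which is what guarantees that each $x_i$ (hence $d$, $|d|$, and $x_{p_i(t)}(t)$) is locally absolutely continuous and that the fundamental theorem of calculus you invoke at the final integration step actually applies.
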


\begin{proof}
Let us fix a time $t>0$. We only give the proof for the case where
$x_1(0)\leq x_2(0)$, so that $p_1(0)=1$. (The proof for the case where 
$x_1(0)> x_2(0)$ is almost the same.)

We start by considering the case where
$p_1(t)= 1$.
Equation
(\ref{eq:appen_evolx}),  
applied to $i=1$, yields
$$x_{1}(t) = x_{1}(0)  + \int_{0}^tv_{1}(\tau)\, d\tau.$$
We
define $T_1=\{\tau\in[0,t]: x_1(\tau) \leq x_2(\tau)\}$ and
$T_2=\{\tau\in[0,t]: x_1(\tau) >x_2(\tau)\}$. Thus, $p_1(\tau) = 1$ for all $\tau\in T_1$ (including $\tau=0$), and $p_1(\tau) = 2$ for all $\tau\in T_2$. Moreover, since
$x$ is continuous, $T_1$ and $T_2$ are measurable sets. 
Therefore,
\begin{align*}
x_{1}(t) &= x_{1}(0)  + \int_{\tau \in
T_1}v_{1}(\tau)\, d\tau + \int_{\tau \in T_2}v_{1}(\tau)\,
d\tau \\&= x_{p_1(0)}(0)  + \int_{\tau \in T_1}v_{p_1(\tau)}(\tau)\, d\tau
+ \int_{\tau \in T_2}v_{1}(\tau)\, d\tau 
\end{align*}
The continuity of $x$, and a fortiori of $x_1-x_2$, implies
that $T_2$ is the union of an at most countable collection of disjoint open
intervals 
$(a_k,b_k)$, with $0<a_k<b_k<t$, $x_2(a_k) =
x_1(a_k)$, and $x_2(b_k) = x_1(b_k)$. For every such interval,
we have 
\begin{align*}
\int_{a_k}^{b_k}v_1(\tau)\, d\tau &= x_1(b_k) - x_1(a_k)\\ & = x_2(b_k) -
x_2(a_k)\\& = \int_{a_k}^{b_k}v_2(\tau)\, d\tau \\& = \int_{a_k}^{b_k}v_{p_1(\tau)}(\tau)\, d\tau,
\end{align*}
which implies that
\begin{align*}
\int_{\tau\in T_2}v_1(\tau)\,d\tau &= \sum_k
\prt{\int_{a_k}^{b_k}v_1(\tau)\,d\tau} \\& = \sum_k
\prt{\int_{a_k}^{b_k}v_{p_1(\tau)}(\tau)\, d\tau}\\ &= \int_{\tau \in
T_2}v_{p_1(\tau)}(\tau)\, d\tau.
\end{align*}
It follows that
\begin{align*}
x_{p_1(t)}(t)& =x_1(t)\\& =x_{1}(0)
 + \int_{\tau \in T_1}v_{p_1(\tau)}(\tau)\, d\tau
+ \int_{\tau \in T_2}v_{p_1(\tau)}(\tau)\, d\tau 
\\ &=x_{p_1(0)}+\int_0^t v_{p_1(\tau)}\,d\tau,
\end{align*}
as claimed.

Suppose now that $p_1(t) = 2$. Let $t^{*} = \max
\{\tau \leq t: x_1(\tau)\leq x_2(\tau) \}$; the maximum is attained because 
$x_1(0)\leq x_2(0)$ and $x$ is continuous.
Furthermore, $x_1(t^*)=x_2(t^*)$, and $p_1(\tau)=2$ for $\tau\in(t^*,t]$. 
We have $p_1(t^{*})=1$, so applying
the result we have proved above, with the interval $[0,t^*]$ replacing $[0,t]$, in the fourth equality below, we obtain
\begin{eqnarray*}
x_{p_1(t)}(t) &=& x_2(t)\\
&=&x_2(t^*) +\int_{t^*}^t v_2(\tau)\, d\tau\\
&=&x_{p_1(t^*)}(t^*) +\int_{t^*} ^t v_2(\tau)\, d\tau\\
&=&x_{p_1(0)}(0) +\int_0^{t^*} v_{p_1(\tau)}(\tau)\, d\tau+\int_{t^*}^t v_{p_1(\tau)}(\tau)\, d\tau\\
&=&x_{p_1(0)}(0)+ \int_0^t v_{p_1(\tau)}(\tau)\, d\tau,
\end{eqnarray*}
as desired.

This concludes the proof regarding $x_{p_1(t)}(t)$. 
The result for $x_{p_2(t)}(t)$ is obtained from
a symmetrical argument.
\end{proof}

\begin{lem}\label{lem:sort3}
Proposition \ref{prop:sort} holds when $n=3$.
\end{lem}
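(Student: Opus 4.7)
The plan is to reduce the $n=3$ case to two applications of Lemma \ref{lem:sort2}, combined with an algebraic identity for the median. We treat the three sorted positions separately: the minimum $x_{p_1(t)}(t)$, the maximum $x_{p_3(t)}(t)$, and the median $x_{p_2(t)}(t)$.

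For the maximum, first apply Lemma \ref{lem:sort2} to the pair $(x_1,x_2)$ to conclude that $M(t) := \max(x_1(t),x_2(t))$ is continuous and satisfies $M(t) = M(0) + \int_0^t w(\tau)\,d\tau$, where $w(\tau) = v_{q(\tau)}(\tau)$ and $q(\tau)\in\{1,2\}$ is the index achieving the larger of $x_1(\tau)$ and $x_2(\tau)$. Now view $(M,x_3)$ as a new two-function system satisfying \eqref{eq:appen_evolx} with velocities $(w,v_3)$, and apply Lemma \ref{lem:sort2} a second time. This yields $\max(M(t),x_3(t)) = x_{p_3(t)}(t)$ together with an integral representation whose integrand equals $w(\tau)$ when $M(\tau)\geq x_3(\tau)$ and $v_3(\tau)$ otherwise; unwinding the definition of $w$, this integrand coincides with $v_{p_3(\tau)}(\tau)$. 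A symmetric argument, applied to $\min(\min(x_1,x_2),x_3)$, handles the minimum $x_{p_1(t)}(t)$.

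The subtle point is the treatment of ties: the lexicographic rule used when invoking Lemma \ref{lem:sort2} on the auxiliary pair $(M,x_3)$ need not match the global lexicographic rule on $\{1,2,3\}$. However, whenever $x_i(\tau)=x_j(\tau)$ on a set $A$ of positive measure, the absolute continuity of $x_i-x_j$ combined with the Lebesgue differentiation theorem at density points of $A$ forces $v_i(\tau)=v_j(\tau)$ for almost every $\tau\in A$. Hence any two admissible choices of index on $A$ produce the same integrand almost everywhere, and the integral is insensitive to the tie-breaking convention.

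For the median, we use the identity
$$ x_{p_2(t)}(t) = x_1(t) + x_2(t) + x_3(t) - x_{p_1(t)}(t) - x_{p_3(t)}(t). $$
The first three terms satisfy the integral equation by hypothesis, and the last two by the arguments just given. Subtracting yields an integral equation for $x_{p_2(t)}(t)$ whose integrand equals $v_1+v_2+v_3-v_{p_1(\tau)}-v_{p_3(\tau)}$, which is $v_{p_2(\tau)}(\tau)$ almost everywhere. The main obstacle is the bookkeeping for the tie-breaking step; as noted, this reduces to a straightforward measure-theoretic argument and does not create any real difficulty.
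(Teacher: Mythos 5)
Your proof is correct and follows essentially the same route as the paper: two nested applications of Lemma \ref{lem:sort2} to handle the extreme positions, and the identity $x_{p_2(t)}(t)=\sum_i x_i(t)-x_{p_1(t)}(t)-x_{p_3(t)}(t)$ for the median. Your measure-theoretic treatment of ties (that $v_i=v_j$ a.e.\ on $\{x_i=x_j\}$) is sound and a reasonable safety net, although with the lexicographic conventions chosen consistently the composed tie-breaking already agrees with the global rule, which is how the paper handles it implicitly.
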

\begin{proof}
The main idea of the proof is to note that $\min_{i=1,2,3}x_i(t)=
\min\{x_1(t),\min\{x_2(t),x_3(t)\}\}$ and to use Lemma 
\ref{lem:sort2} twice.

Let $l(t) = 2$ if $x_2(t)\leq x_3(t)$, and let $l(t) =3$ otherwise. Let also
 $z_2(t) = x_{l(t)}(t)$ and $w_2(t) = v_{l(t)}(t)$. It follows
from Lemma \ref{lem:sort2}, applied to $x_2(t)$ and $x_3(t)$ that $z_2(t) = z_2(0) + \int_{0}^t
w_2(\tau)\, d\tau$. 

We let $z_1(t) = x_1(t)$ and $w_1(t) =
v_1(t)$. We then let $\lambda(t) =1$ if $y_1(t)\leq y_2(t)$, and $\lambda(t)=2$ otherwise. 
Using Lemma \ref{lem:sort2} once more, on $z_1(t)$ and $z_2(t)$, we have $z_{\lambda(t)}(t)=
z_{\lambda(0)}(0) + \int_{0}^t w_{\lambda(\tau)}(\tau)\,d\tau$.
Observe now that $\min_i x_i(t)=\min_i z_i(t)$, so that $p_1(t) =
1$ when $\lambda(t) =1$, and $p_1(t) = l(t)$ when $\lambda(t) =2$.
Therefore,
\begin{align*}
x_{p_1(t)}(t) &= z_{\lambda(t)}(t)\\ & = z_{\lambda(0)}(0) + \int_0^t
w_{\lambda(\tau)}(\tau)\, d\tau \\&= x_{p_1(0)}(0) + \int_0^t
v_{p_1(\tau)}(\tau)\, d\tau. 
\end{align*}
This proves the desired result
for $p_1(t)$. A symmetrical argument shows
the result for $p_3(t)$ as well. 

It remains to prove the result for 
$p_2(t)$. Observe that since $p$ is a permutation, we have 
$\sum_{i=1}^3  x_i(t)= \sum_{i=1}^3  x_{p_i(t)}(t)$ and
$\sum_{i=1}^3  v_i(t)= \sum_{i=1}^3  v_{p_i(t)}(t)$. Therefore, 
\begin{eqnarray*}
x_{p_2(t)}(t)&=& \prt{\sum_{i=1}^3 x_i(t)} - x_{p_1(t)}(t) -
x_{p_3(t)}(t) \\
&=& \prt{\sum_{i=1}^3 x_i(0)} - x_{p_1(0)}(0) - x_{p_3(0)}(0) \\ \hspace{-20pt}&+&
\int_{0}^t \prt{\prt{\sum_{i=1}^3 v_i(\tau)}- v_{p_1(\tau)}(\tau)
- v_{p_3(\tau)}(\tau) }d\tau\\ &=&x_{p_2(t)}(t) +
\int_{0}^tv_{p_2(\tau)}(\tau)d\tau.
\end{eqnarray*}
\end{proof}

We can now prove Proposition \ref{prop:sort}, using induction. 
\begin{proof}
Lemmas \ref{lem:sort2} and \ref{lem:sort3} establish the 
result for $n=2,3$. Suppose that the result holds for $n-1$, where $n\geq 4$;  
we will show that it also holds for $n$.

Let $q(t)$ be a permutation on $\{1,\dots,n-1\}$ such that, if $i<j$ then either $x_{q_i(t)}(t)< x_{q_j(t)}$ or $x_{q_i(t)}(t)=x_{q_j(t)}$ and $q_i(t) < q_j(t)$. 
(This corresponds to a sorting of the first $n-1$ components of $x(t)$, according to the same lexicographic rules used earlier to define $p(t)$.) 
It follows from our induction 
hypothesis that
$$
x_{q_{i}(t)}(t) = x_{q_{i}(0)}(0) + \int_{0}^t
v_{q_{i}(\tau)}(\tau)\, d\tau.$$ 
for $i=1,\dots,n-1.$

Let us now fix some $k\neq 1,n$. We will prove the desired result for
$x_{p_k(t)}$. 
Note that the sorted version of $x(t)$ (as captured by the coefficients $p_i(t)$) is obtained by inserting $x_n(t)$ into the sorted version of
the first $n-1$ components of $x(t)$ (as captured by the coefficients $q_i(t)$), at the appropriate position. In particular, 
there are only three possible values for $p_k(t)$, namely $q_k(t)$, $q_{k-1}(t)$, and $n$. In more detail, the value of $p_k(t)$
is determined as follows:
\begin{equation}\label{eq:pk_qk_qk-1}\begin{array}{rlllclll}
&\hspace{-5pt}x_n(t) &\hspace{-5pt}<\hspace{-5pt}& x_{q_{k-1}(t)}(t) & \Rightarrow  & p_k(t)  &\hspace{-5pt}=\hspace{-5pt}&
q_{k-1}(t),\\
x_{q_{k-1}(t)}(t) \leq &\hspace{-5pt}x_n(t) &\hspace{-5pt}<\hspace{-5pt}& x_{q_{k}(t)}(t) &\Rightarrow &
p_k(t) &\hspace{-5pt}=\hspace{-5pt}& n,
\\
x_{q_k(t)}(t)\leq &\hspace{-5pt}x_n(t) &\hspace{-5pt}\hspace{-5pt} &&\Rightarrow & p_k(t)  &\hspace{-5pt}=\hspace{-5pt}& q_k(t).
\end{array}\end{equation}
In order to focus on the three possible values of $p_k(t)$, we now define 
$z_1(t) = x_{q_{k-1}(t)}(t)$, $w_1(t) =
v_{q_{k-1}(t)}(t)$, $z_2(t) = x_{q_{k}(t)}(t)$, $w_2(t) =
v_{q_k(t)}(t)$, $z_3(t) = x_n(t)$, and $w_3(t) = v_n(t)$. Let then
$r(t)$ be a permutation on $\{1,2,3\}$ 
that sorts the components of $z(t)$ according to the same lexicographic rules used earlier for
$q$ and $p$. It follows
from (\ref{eq:pk_qk_qk-1}) and the definition of $z,w,r$, that
$z_{r_2(t)}= x_{p_k(t)}(t)$ and $w_{r_2(t)}(t) = v_{p_k(t)}(t)$.
Using Lemma \ref{lem:sort3}, we obtain
\begin{align*}
x_{p_k(t)}(t) &= z_{r_2(t)} \\& = z_{r_{2}(0)}(0) + \int_{0}^t
w_{r_{2}(\tau)}(\tau)\, d\tau\\ & = x_{p_k(0)}(0) + \int_{0}^t
v_{p_{k}(\tau)}(\tau)\, d\tau.
\end{align*}
This completes the proof of the result for
$k\not = 1,n$. The proof made use of the induction hypothesis together with Lemma \ref{lem:sort3}. The proof 
for the remaining cases ($k=1$ or $k=n$) is entirely similar, except that relies on Lemma
\ref{lem:sort2} instead of Lemma \ref{lem:sort3}. 
\end{proof}


\begin{thebibliography}{10}

\bibitem{Ben-naimKrapivskyRedner:2003}
E.~Ben-Naim, P.L. Krapivsky, and S.~Redner.
\newblock Bifurcations and patterns in compromise processes.
\newblock {\em Physica D}, 183(3):190--204, 2003.

\bibitem{BertsekasTsitsiklis:1989}
D.P. Bertsekas and J.N. Tsitsiklis.
\newblock {\em Parallel and Distributed Computation: Numerical Methods.}
\newblock Prentice-Hall, Englewood Clifffs (NJ), USA, 1989.

\bibitem{BlondelHendrickxOlshevskyTsitsiklis:2005}
V.D. Blondel, J.M. Hendrickx, A.~Olshevsky, and J.N. Tsitsiklis.
\newblock Convergence in multiagent coordination, consensus, and flocking.
\newblock In {\em Proceedings of the 44th IEEE Conference on Decision and
  Control (CDC'2005)}, pp.\ 2996--3000, Seville, Spain, December 2005.

\bibitem{BlondelHendrickxTsitsiklis:2009_Krausemodel}
V.D. Blondel, J.M. Hendrickx, and J.N. Tsitsiklis.
\newblock On {K}rause's multi-agent consensus model with state-dependent
  connectivity.
\newblock {\em IEEE transactions on Automatic Control}, 54(11):2586--2597,
  2009.

\bibitem{BlondelHendrickxTsitsiklis:2009_APCT}
V.D. Blondel, J.M. Hendrickx, and J.N. Tsitsiklis.
\newblock Continuous-time average-preserving opinion dynamics with
  opinion-dependent communications.
\newblock {\em SIAM Journal on Control and Optimization}, 48(8):5214--5240,
  2010.

\bibitem{BoydGhoshPrabhakarShah:2005}
S.~Boyd, A.~Ghosh, B.~Prabhakar, and D.~Shah.
\newblock Gossip algorithms: Design, analysis and applications.
\newblock In {\em Proceedings of the 24th IEEE conference on Computer
  Communications (Infocom'2005)}, Vol.~3, pp.\ 1653--1664, Miami (FL), USA,
  March 2005.

\bibitem{CanutoFagnaniTilli:2008}
C.~Canuto, F.~Fagnani, and P.~Tilli.
\newblock A {E}ulerian approach to the analysis of rendez-vous algorithms.
\newblock In {\em Proceedings of the 17th {IFAC} {W}orld {C}ongress ({IFAC'08})},
  pp.\ 9039--9044, July 2008.

\bibitem{CastellanoFortunatoLoreto:2009}
C.~Castellano, S.~Fortunato, and V.~Loreto.
\newblock {Statistical physics of social dynamics}.
\newblock {\em Reviews of Modern Physics}, 81(2):591--646, 2009.

\bibitem{Chazelle:2009}
B.~Chazelle.
\newblock {The convergence of bird flocking}.
\newblock {\em arXiv:0905.4241v1 [cs.CG]}, 2009.

\bibitem{DeffuantNeauAmblardWeisbuch:2000}
G.~Deffuant, D.~Neau, F.~Amblard, and G.~Weisbuch.
\newblock Mixing beliefs among interacting agents.
\newblock {\em Advances in Complex Systems}, 3:87--98, 2000.

\bibitem{FortunatoLatoraPluchinoRapisarda:2005}
S.~Fortunato, V.~Latora, A.~Pluchino, and Rapisarda R.
\newblock Vector opinion dynamics in a bounded confidence consensus model.
\newblock {\em International Journal of Modern Physics C}, 16:1535--1551,
  2005.

\bibitem{HegselmannKrause:2002}
R.~Hegselmann and U.~Krause.
\newblock Opinion dynamics and bounded confidence models, analysis, and
  simulations.
\newblock {\em Journal of Artificial Societies and Social Simulation},
  5(3), 2002. \url{http://jasss.soc.surrey.ac.uk/9/1/8.html}

\bibitem{Hendrickx:2008phdthesis}
J.M. Hendrickx.
\newblock {\em Graphs and Networks for the Analysis of Autonomous Agent
  Systems}.
\newblock PhD thesis, Universit{\'e} catholique de Louvain, 2008.
%\url{http://www.inma.ucl.ac.be/~hendrickx/availablepublications/Thesis_Julien_Hendrickx.pdf}
\url{http://perso.uclouvain.be/julien.hendrickx/availablepublications/Thesis_Julien_Hendrickx.pdf}

\bibitem{HendrickxBlondel:2006_MTNS}
J.M. Hendrickx and V.D. Blondel.
\newblock Convergence of different linear and non-linear {V}icsek models.
\newblock In {\em Proceedings of the 17th International Symposium on
  Mathematical Theory of Networks and Systems (MTNS'2006)}, pp.\ 1229--1240,
  Kyoto, Japan, July 2006.


\bibitem{HendrickxTsitsiklis_cutbalanced:2011_cdc}
J.M. Hendrickx and J.N. Tsitsiklis
\newblock A new condition for convergence in continuous-time consensus seeking systems.
\newblock In {\em Proceedings of the 50th IEEE Conference on Decision and Control (CDC 2011)}, Orlando (FL, USA), December 2011.


\bibitem{HendrickxTsitsiklis_cutbalanced:2013}
J.M. Hendrickx and J.N. Tsitsiklis
\newblock Convergence of type-symmetric and cut-balanced consensus seeking systems.
\newblock  {\em IEEE Transactions on Automatic Control},
58(1):214-218, 2013.






\bibitem{JadbabaieLinMorse:2003}
A.~Jadbabaie, J.~Lin, and A.~S. Morse.
\newblock Coordination of groups of mobile autonomous agents using nearest
  neighbor rules.
\newblock {\em IEEE Transactions on Automatic Control}, 48(6):988--1001, 2003.


\bibitem{LiWang:2004}
S.~Li and H.~Wang.
\newblock Multi-agent coordination using nearest neighbor rules: a revisit to
  {V}icsek model.
\newblock {\em arXiv:cs/0407021v2 [cs.MA]}, 2004.

\bibitem{LinBrouckeFrancis:2004}
Z.~Lin, M.~Broucke, and B.~Francis.
\newblock {Local control strategies for groups of mobile autonomous agents}.
\newblock {\em IEEE Transactions on Automatic Control}, 49(4):622--629, 2004.

\bibitem{Lorenz:2003diplomathesis}
J.~Lorenz.
\newblock {\em Mehrdimensionale {M}einungsdynamik bei wechselndem {V}ertrauen}.
\newblock PhD thesis, Universitat Bremen, 2003.
  \url{http://elib.suub.uni-bremen.de/dipl/docs/00000056.pdf} 

\bibitem{Lorenz:2005}
J.~Lorenz.
\newblock A stabilization theorem for continuous opinion dynamics.
\newblock {\em Physica A}, 355(1):217--223, 2005.

\bibitem{Lorenz:2007}
J.~Lorenz.
\newblock Continuous opinion dynamics under bounded confidence: A survey.
\newblock {\em International Journal of Modern Physics C}, 18(12):1819--1838,
  2007.

\bibitem{MateiBaras:2009}
I.~Matei and J.S. Baras.
\newblock Convergence results for the linear consensus problem under Markovian
  random graphs.
\newblock ISR Technical report 2009-18, University of Maryland, 2009.
  \url{http://drum.lib.umd.edu/bitstream/1903/9693/5/IMateiJBaras.pdf}

\bibitem{MateiMartinsBaras:2008}
I.~Matei, N.~Martins, and J.S. Baras.
\newblock Almost sure convergence to consensus in Markovian random graphs.
\newblock {\em 47th IEEE Conference
  on Decision and Control,} 2008, pp.\ 3535--3540. 

\bibitem{MirtabatabaeiBullo:2011}
A.~Mirtabatabaei and F.~Bullo.
\newblock On opinion dynamics in heterogeneous networks.
\newblock {\em Submitted,} 2010. \url{http://motion.me.ucsb.edu/pdf/2010s-mb.pdf}

\bibitem{Moreau:2004}
L.~Moreau.
\newblock Stability of continuous-time distributed consensus algorithms.
\newblock In {\em Proceedings of the 43st IEEE Conference on Decision and
  Control}, Vol.~4, pp. 3998--4003, Paradise Island, Bahamas,
  December 2004.

\bibitem{Moreau:2005}
L.~Moreau.
\newblock Stability of multiagent systems with time-dependent communication
  links.
\newblock {\em IEEE Transactions on Automatic Control}, 50(2):169--182, 2005.

\bibitem{OlfatiSaberFaxMurray:2006}
R.~Olfati-Saber, J.A. Fax, and R.M. Murray.
\newblock Consensus and cooperation in networked multi-agent systems.
\newblock {\em Proceedings of the IEEE}, 95(1):215--233, 2007.

\bibitem{OlfatiSaberMurray:2004}
R.~Olfati-Saber and R.M. Murray.
\newblock Consensus problems in networks of agents with switching topology and
  time-delays.
\newblock {\em IEEE Transactions on Automatic Control}, 49(9):1520--1533,
  2004.

\bibitem{RenBeard:2005}
W.~Ren and R.W. Beard.
\newblock Consensus seeking in multiagent systems under dynamically changing
  interaction topologies.
\newblock {\em IEEE Transactions on Automatic Control}, 50:655--661, 2005.

\bibitem{RenBeardAtkins:2007}
W.~Ren, R.W. Beard, and E.M. Atkins.
\newblock Information consensus in multivehicle cooperative control.
\newblock {\em IEEE Control and Systems Magazine}, 27(2):71--82, 2007.


\bibitem{Sontag:1998}
E.D.~Sontag, 
\newblock {\em Mathematical Control Theory: Deterministic Finite Dimensional Systems.}
\newblock Second Edition, Springer, New York, 1998.



\bibitem{TahbazJadbabaie:2010}
A.~Tahbaz-Salehi and A.~Jadbabaie.
\newblock {Consensus over ergodic stationary graph processes}.
\newblock {\em IEEE Transactions on Automatic Control,} 55(1):225--230, 2010.

\bibitem{TouriNedic:2010}
B.~Touri and A.~Nedic.
\newblock {When Infinite Flow is Sufficient for Ergodicity}.
\newblock In {\em Proceedings of the 49th IEEE Conference on Decision and
  Control (CDC'2010)}, Atlanta (GA, USA), 2010.
  \url{https://netfiles.uiuc.edu/touri1/www/Papers/ergodicityCDC.pdf} 

\bibitem{Tsitsiklis:84phdthesis}
J.N. Tsitsiklis.
\newblock {\em Problems in decentralized decision making and computation}.
\newblock PhD thesis, Dept. of Electrical Engineering and Computer Science,
  Massachusetts Institute of Technology, 1984.
  \url{http://web.mit.edu/jnt/www/PhD-84-jnt.pdf}

\bibitem{VicsekCzirolBenjacobCohenSchchet:1995}
T.~Vicsek, A.~Czirok, I.~Ben~Jacob, I.~Cohen, and O.~Schochet.
\newblock Novel type of phase transitions in a system of self-driven particles.
\newblock {\em Physical Review Letters}, 75:1226--1229, 1995.

\bibitem{XiaoWang:2008}
F.~Xiao and L.~Wang.
\newblock {Asynchronous consensus in continuous-time multi-agent systems with
  switching topology and time-varying delays}.
\newblock {\em IEEE Transactions on Automatic Control}, 53(8):1804--1816, 2008.

\bibitem{XiaoBoydLall:2005}
L.~Xiao, S.~Boyd, and S.~Lall.
\newblock {A scheme for robust distributed sensor fusion based on average
  consensus}.
\newblock In {\em Fourth International Symposium on Information Processing in Sensor Networks (IPSN 2005)}, 63--70, 2005.

\end{thebibliography}
\end{document}